\def\T{{ \mathrm{\scriptscriptstyle T} }}
\newcommand{\vecc}{{\textnormal{vec}}}
\newcommand{\cov}{\textnormal{cov}}
\newcommand{\vecl}{{\textnormal{vecl}}}
\newcommand{\dif}{{\textnormal{d}}}
\def\##1\#{\begin{align}#1\end{align}}
\def\$#1\${\begin{align*}#1\end{align*}}
\newcommand{\Rom}[1]{\text{\uppercase\expandafter{\romannumeral #1\relax}}}
\newcommand{\pr}{\textnormal{pr}}
\newcommand{\var}{\textnormal{var}}
\begin{document}
\title{  {\LARGE Counting Process Based Dimension Reduction Methods for Censored Outcomes}\thanks{The first two authors contribute equally.}}

\author{
Qiang Sun\thanks{Department of Statistical Sciences, University of Toronto, Toronto,  ON M5S 3G3; e-mail: \href{mailto:qsun@utstat.toronto.edu}{qsun@utstat.toronto.edu}. }
~~Ruoqing Zhu\thanks{The corresponding author. Department of Statistics, University of Illinois at Urbana-Champaign, Champaign, IL 61820; email: \href{mailto:rqzhu@illinois.edu}{rqzhu@illinois.edu}.}
~~Tao Wang \thanks{Department of Bioinformatics and Biostatistics, Shanghai Jiao Tong University, Shanghai, China; email: \href{mailto: neowangtao@sjtu.edu.cn}{neowangtao@sjtu.edu.cn}.}
~~Donglin Zeng\thanks{Department of Biostatistics,  University of North Carolina at Chapel Hill, Chapel Hill, NC 27599; email:  \href{mailto:dzeng@email.unc.edu}{dzeng@email.unc.edu}.}
}

\date{\today}

\maketitle




\begin{abstract}
We propose a class of dimension reduction methods for right censored survival data using a counting process representation of the failure process. Semiparametric estimating equations are constructed to estimate the dimension reduction subspace for the failure time model. The proposed method addresses two fundamental limitations of existing approaches. First, using the counting process formulation, it does not require any estimation of the censoring distribution to compensate the bias in estimating the dimension reduction subspace. Second, the nonparametric part in the estimating equations is adaptive to the structural dimension, hence the approach circumvents the curse of dimensionality. Asymptotic normality is established for the obtained estimators. We further propose a computationally efficient approach that simplifies the estimation equation formulations and requires only a singular value decomposition to estimate the dimension reduction subspace.  Numerical studies suggest that our new approaches exhibit significantly improved performance for estimating the true dimension reduction subspace. We further conduct a real data analysis on a skin cutaneous melanoma dataset from The Cancer Genome Atlas. The proposed method is implemented in the R package ``orthoDr''.
\end{abstract}

\noindent {\bf keywords}
Sufficient Dimension Reduction; Survival Analysis; Estimating Equations; Semiparametric Inference; Sliced Inverse Regression.

\section{Introduction}\label{sec:1}

Dimension reduction is an important problem in regression analysis. It aims to extract a low-dimensional subspace from a $p$-dimensional covariates $X=(X_1,\ldots, X_p)^\T$, to predict an outcome of interest $T$. The dimension reduction literature often assumes the multiple-index model
\vspace{-0.15in}
\#\label{eq:1}
T=h\big(B^\T X, \epsilon\big),
\vspace{-0.15in}
\#
where $\epsilon$ is a random error independent of $X$, $B\in\RR^{p\times d}$ is a coefficient matrix with $d < p $, and $h(\cdot)$ is a completely unknown link function. This model is equivalent to assuming $T\perp X \mid B^\T X$ \citep{li1991sliced}. Since any $d$ linearly independent vectors in the linear space spanned by the columns of $B$ also satisfy model (\ref{eq:1}) for some  $h$, we define $\cS(B)$ to be this linear subspace. Furthermore, we call the intersection of all such subspaces satisfying $T\perp X \mid B^\T X$ the central subspace, denoted by $\cS_{T|X}$, whose dimension is called the structural dimension. According to \citet{cook2009regression}, $\cS_{T|X}$ is uniquely defined under mild conditions. The goal of sufficient dimension reduction in \eqref{eq:1} is to determine the structural dimension and the central subspace using empirical data.

There is an extensive literature on estimating the central subspace for completely observed data, including the seminal paper from \cite{li1991sliced} and subsequent works such as \cite{cook1991discussion}, \cite{zhu2006sliced}, \cite{li2007directional}, \cite{xia2007constructive}, and \cite{ma2012semiparametric}. When $T$ is subject to right censoring, which frequently occurs in survival analysis, model (\ref{eq:1}) includes many well-known survival models as special cases, for instance, the Cox proportional hazard model \citep{cox1972regression}, the accelerated failure-time model \citep{lin1998accelerated}, and the linear transformation models \citep{zeng2007maximum}.

There has been limited work on estimating the dimension reduction subspace using censored observations. \cite{li1999dimension} propose a modified sliced inverse regression method which uses the estimate of the conditional survival function to account for censored cases. \cite{xia2010dimension} propose to estimate the conditional hazard function nonparametrically and utilize its gradient and local linear regression to construct the dimension reduction directions. In \cite{li1999dimension}, a $p$ dimensional kernel estimation is used for compensating the bias caused by censoring, while in \cite{xia2010dimension}, the estimation procedure also requires a $p$ dimensional kernel hazard function to provide reliable initial value, and then gradually reduce the dimension to $d$. Hence these methods inevitably suffer from the curse of dimensionality. When $p$ is not small, some alternative approaches such as \cite{lu2011sufficient} adopt an inverse probability weighting scheme, which implicitly requires the correct specification of the censoring mechanism.

In this paper, we propose a counting process-based dimension-reduction framework that leads to four different approaches. The proposed methods enjoy unique advantages and address several limitations of the existing literature. First, our approach is built upon a counting process representation of the underlying survival model. This allows a construction of doubly robust estimating equations, and the resulting estimator is more stable than existing approaches such as \cite{xia2010dimension}. This formulation can avoid the linearity assumption \citep{li1991sliced} and the estimation of any censoring distribution, which are necessary components of \cite{li1999dimension} and \cite{lu2011sufficient}. Second, the proposed framework is adaptive to the structural dimension in the sense that the involved nonparametric estimations only depend on the dimension of $\cS(B)$, which is usually small, thus circumvents the curse of dimensionality. To this end, the proposed method shares similar advantage as \cite{xia2010dimension}. However, computationally, we utilize an optimization approach on the Stiefel manifold \citep{wen2013feasible} to solve the estimating equations, which is numerically stable and fast. Last, under some restrictive assumptions, our method reduces to a computationally efficient approach that can directly estimate the dimension reduction subspace without nonparametric smoothing.

\section{Proposed methods} \label{sec:method}

\subsection{Semiparametric estimating equations for the central subspace}
Throughout the paper, we denote the failure time by $T$ and the censoring time by $C$. Let $Y \!=\! \min(T, C)$ and $\delta = I(T \leq C)$ be the observed event time and the censoring indicator, respectively. We assume that $C$ is independent of $T$ conditional on $X$.
Let
$N(u) \!=\!  I( Y\!\leq\! u ,\delta\!=\!1)$
and
$Y(u) \!=\! I(  Y \!\geq\! u)$
denote the observed counting process and the at-risk process, respectively. Let $\lambda(u|X)$ be the conditional hazard  for $T$ given $X$. Due to \citet{xia2010dimension}, model (\ref{eq:1}) is equivalent to $\lambda({u|X}) = \lambda({u|B^\T X})$. We further let $dM(u,X)\!=\!dM(u, B^\T X)\!=\!dN(u) -\lambda(u|B^\T X)Y(u)du$ be the martingale increment process indexed by $u$. This paper centers on constructing estimation equations that are built upon the counting process representation of the survival model. To derive the estimating equations, we follow \cite{bickel1993efficient} and \cite{tsiatis2007semiparametric} to obtain the ortho-complement of the nuisance tangent space at $B$ as
\#\label{eq:3.4}
\cE^\perp&=\bigg\{ \int \big\{\alpha(u, X)-\alpha^*(u, B^\T X)\big\} dM(u,X): \nonumber \\
&\qquad\quad\alpha(u,X) \,\,\text{is a measurable function of $X$ and $u$} \bigg\},
\#
where
\$
\alpha^*(u,B^\T X) = E\big\{\alpha(u,X)\big| \cF_{u} , B^\T X\big\},
\$
and $\cF_{u}$ the filtration. The derivation can be found in the Supplementary Material. To estimate $B$, we consider the unbiased estimating equations
\#\label{eq:eepop}
E\Big[ \int \big\{\alpha(u, X)-\alpha^*(u, B^\T X)\big\} \big\{dN(u)-\lambda(u|B^\T X)Y(u)du\big\} \Big] = 0.
\#
The sample versions based on $n$ independent and identical copies $\{Y_i, \delta_i, X_i\}_{i=1}^n$ are given by
\#\label{eq:ee}
n^{-1}\sum_{i=1}^n\Big[\int \big\{\alpha(u, X_i)-\alpha^*(u, B^\T X_i)\big\} \big\{dN_i(u)-\lambda(u|B^\T X_i)Y_i(u)du\big\}\Big]=0,
\#
where the conditional hazard function will be estimated using the data. For some particular choices of $\alpha(u, X)$, this can be implemented utilizing the generalized method of moments \citep{hansen1982large}:
\#\label{eq:GMM}
B=\underset{B\in \Theta}{\arg\min} \,\, \big\{ {\psi}_n(B)^\T {\psi}_n(B) \big\},
\#
where ${\psi}_n(B)$ is the left hand side of \eqref{eq:ee}. Several quantities in ${\psi}_n(B)$ need to be estimated nonparametrically. For example, the conditional hazard function $\lambda(u|B^\T X_i)$ at any time point $u$ can be estimated by
\#
\widehat \lambda(u | B^\T X=z) &= \frac{\sum_{i=1}^n K_b(Y_i-u)\delta_iK_h\big(B^\T X_i - z\big)}{\sum_{j=1}^n I\big(Y_j \geq u\big) K_h\big(B^\T X_j - z\big)}, \label{eq:chf}
\#
for some bandwidths $b$, $h$, and kernel function $K_h(\cdot) = h^{-1}K(\cdot/h)$. Details of these nonparametric components will be deferred to Section \ref{sec:alg}. It is worth noting that this nonparametric component requires is only a $d$ dimensional kernel, hence adaptive to the underlying structure.

It is then crucial to choose specific forms of $\alpha(u, X)$. Different choices may result in simplifications of the above formulation and/or gain additional theoretical and computational advantages. In the next two sections, we present four different choices, which fall into two categories: the forward and inverse regression schemes. The main differences between the two schemes are whether the counting process $N(u)$ is used in the definition of $\alpha(u, X)$. The forward regression scheme is essentially the estimating equations approach in the normal regression, while the inverse regression scheme utilizes $N(u)$ to mimic the sliced inverse regression \citep{li1991sliced} conceptually.

\subsection{Forward regression} In the forward regression scheme, we choose $\alpha(u, X)$ such that it does not depend on the observed failure process $N(u)$. We first notice that, as long as $\alpha(u, X)$ depends at most on the at-risk process $Y(u)$, we can simplify the estimating equations in \eqref{eq:eepop} into
\# \label{eq:ee-FR}
E\bigg({{{\int}}} \Big[\alpha(u, X) -E\big\{\alpha(u, X)|Y(u)=1, B^\T X\big\} \Big] dN(u)\bigg) = 0.
\#

We now give one example of $\alpha(u, X)$ in the following when the structural dimension $d=1$. This requires only a $1$-dimensional nonparametric estimation.
\begin{example}\label{ex:FR}
With $\alpha(u, X)=X$, the population version of the $p$-dimensional estimating equations is given by:
\#\label{eq:FR1}
E\bigg({{{\int}}} \Big[X-E\big\{X|Y(u)=1, B^\T X\big\} \Big] dN(u)\bigg)=0.
\#
This formulation reduces to the set of efficient estimating equations for the Cox proportional hazard model when the exponential link is known to be the underlying truth. It can also be used for the transformation models proposed by \cite{zeng2007maximum}. For some simple extensions, we could let $\alpha(u, X) = E\{X Y(u)\} X^\T$ to obtain $p$-by-$p$ dimensional estimating equations, which is suitable for the case of $d>1$. To implement the forward regression method given by \eqref{eq:FR1}, noticing that $dN_i(u)$ takes a jump at $Y_i$ only if $\delta_i = 1$, we can estimate $\psi_n(B)$ in \eqref{eq:GMM} using
\#
\widehat \psi_n\big(B\big) &= \frac{1}{n}\sum_{i=1}^n \big\{ X_i - \widehat E(X | Y \geq Y_i, B^\T X_i ) \big\} \delta_i, \label{eeq:FR1}
\#
 where $\widehat E\{X | Y \geq u, B^\T X = z\}$ takes the following form for any given $u$ and $z$,
\#
\frac{\sum_{i=1}^n X_i I\big( Y_i \geq u \big) K_h \big(B^\T X_i - z \big) }{\sum_{i=1}^n I\big( Y_i \geq u \big) K_h \big(B^\T X_i - z \big)}, \label{eq:condx}
\#
for some choices of bandwidth $h$ and kernel function $K_h(\cdot)$. Again, these details are deferred to Section \ref{sec:alg}.
\end{example}

\subsection{Inverse regression}\label{sec:IRS}

In this section, we focus on the inverse regression scheme. An important property that motivates the development is
\#\label{eq:local}
\big\{ d N(u) \,|\, Y(u)=1, B^\T X\big\}
&\sim \text{Bernoulli} \big\{ \lambda(u|B^\T X)du \big\},
\#
where $dN(t)=N(t+dt)-N(t)$. Hence, we can consider the sliced conditional mean of $X$ given the outcome of $dN(t)$ among the risk set, i.e., $Y(t) = 1$. This leads to the construction of a local mean difference that is essentially the sliced mean difference for the binary outcome $dN(u)$ \citep{cook1999dimension}:
\#\label{eq:phiu}
\varphi(u) = E\big\{X\big|dN(u)\!=\!1, Y(u)\!=\!1\big\}\!-\!E\big\{X\big|dN(u)\!=\!0, Y(u)\!=\!1\big\}.
\#
It should be noted that the outcome $dN(u)$ conditioning on the event $Y(u) = 1$ depends only on the failure model $\lambda(u|B^\T X)$ \citep{xia2010dimension}. Hence, by varying the argument $u$, the inverse regression curve $\varphi(u)$ is contained within the central subspace $\cS_{T|X}$. With this $\varphi(u)$ established, we consider the function
\#\label{eq:alpha}
\alpha(u,X) = X \varphi^\T(u). 
\#
Then,
\#\label{eq:alpha2}
\alpha(u,X) - \alpha^\ast(u, B^\T X)  = \big[ X - E\{X | Y(u) = 1, B^\T X\} \big] \varphi^\T(u).
\#
This particular choice can be implemented by estimating $E\{X | Y(u) = 1, B^\T X\}$ using equation \eqref{eq:condx} and estimating $\varphi(u)$ using the sliced average based on \eqref{eq:phiu}:
\# \label{eq:phi}
\widehat \varphi(u)=\frac{\sum_{i=1}^n X_i I\big( u \leq Y_i < u + h, \delta_i = 1\big) }{\sum_{i=1}^n I\big( u \leq Y_i < u+h, \delta_i = 1\big)} - \frac{\sum_{i=1}^n X_i I\big(Y_i \geq u\big)}{\sum_{i=1}^n I\big(Y_i \geq u\big)}.
\#
Based on this choice of $\alpha$, we propose two methods that utilize the estimating equations \eqref{eq:eepop}, and a computationally efficient method that further simplifies the formula to a singular value decomposition problem.

\begin{example}\label{ex:Semi}
Replacing $\alpha(u,X) - \alpha^\ast(u, B^\T X)$ in \eqref{eq:eepop} by \eqref{eq:alpha2} leads to the semiparametric inverse regression approach in its population version:
\#\label{eq:semi}
E \bigg( \int \big[ X - E\{X | Y(u) = 1, B^\T X\} \big] \varphi^\T(u) dM(u) \! \bigg).
\#
This consists of $p \times p$ estimating functions, and is able to handle $d > 1$. However, the nonparametric estimation part is only $d$ dimensional as reflected by $B^\T X$. Furthermore, this formulation enjoys the double robustness property which is illustrated in the Supplementary Material. Similar phenomenon has been observed by \cite{ma2012semiparametric} in the regression setting but without censoring. This suggests that if one of the terms $E\{X | Y(u) = 1, B^\T X\}$ and $M(u)$ is estimated incorrectly, we can still obtain consistent estimations of the dimension reduction subspace. In our numerical experiment, we do observe numerical advantage of this approach over its simplified version, which is given in Example \ref{ex:CP}.
\end{example}

To implement this method, a vectorized $\psi_n(B)$ is given by
\#\label{eeq:IR-Semi}
\!\!\widehat\psi_n\big(B\big)\!=\!\vecc \Bigg[
\!\frac{1}{n}\sum_{i=1}^n \sum_{\substack{j=1 \\ \delta_j = 1}}^n \!\left\{ X_i \!-\! \widehat E \big(X \big| Y \geq Y_j, B^\T X_i\big) \right\} \widehat \varphi^\T(Y_j) \left\{\delta_i I(j \!=\! i) \!-\! \widehat \lambda\big(Y_j | B^\T X_i\big) \right\} \Bigg],
\#
where $\widehat E\{X | Y \geq u, B^\T X = z\}$ and $\widehat \varphi^\T(u)$ are given in \eqref{eq:condx} and \eqref{eq:phi}, respectively, and the conditional hazard function can be estimated by \eqref{eq:chf}. We finally apply the generalized method of moments \eqref{eq:GMM} to estimate $B$.

\begin{example}\label{ex:CP}
Similar to the forward regression example, our choice of $\alpha(u,X)$ in \eqref{eq:alpha} depends on at most the at-risk process $Y(u)$. Hence, the estimating functions in \eqref{eq:semi} can be simplified to the following counting process inverse regression approach:
\#\label{eq:dN}
E \bigg( \int \big[ X - E\{X | Y(u) = 1, B^\T X\} \big] \varphi^\T(u) dN(u) \! \bigg).
\#

Replacing $dM(u)$ with $dN(u)$ greatly reduces the computational burden. This can be seen from \eqref{eeq:IR-Semi}, where a conditional hazard function $\widehat \lambda\big(Y_j | B^\T X_i\big)$ needs to be evaluated at each observed failure time point $j$ for all observations $i$. Of course, by doing this simplification, we lose the double robustness property. The implementation of this approach is a simplified version of \eqref{eeq:IR-Semi} with:
\#\label{eeq:IR-CP}
\!\!\widehat \psi_n\big(B\big)\!=\!\vecc \Bigg[
\!\frac{1}{n}\sum_{i=1}^n \!\left\{ X_i \!-\! \widehat E \big(X \big| Y \geq Y_i, B^\T X_i\big) \right\}\delta_i \widehat \varphi^\T(Y_i) \Bigg],
\#
where the estimations of nonparametric components are provided previously.
\end{example}

\begin{example}\label{ex:CPSIR} With some additional assumptions, $B$ can be estimated without any nonparametric smoothing. We make the following definitions:

\begin{definition}
For any $\alpha \in \mathbb{R}^p$ and any $u >0$, the linearity condition \citep{li1991sliced} is satisfied further conditioning on the event $\{Y(u) = 1\}$, i.e.,
\#
E\{\alpha^\T X | Y(u) = 1, B^\T X = z\} = c_0(u) + c^\T(u) z,
\#
where $c_0(u)$ and $c(u)$ are constants that possibly depend on $u$. Furthermore, the time-invariant covariance condition requires
\#
\text{Cov}\{X | Y(u) = 1\} = c_1(u) \Sigma,
\#
where $c_1(u)$ is some constant that depends on $u$.
\end{definition}
Noticing that after centering $X$ at time point $u$, if the above two conditions are satisfied, we have
\$
&E(X| Y(u) = 1, B^\T X) - E(X| Y(u) = 1) \nonumber \\
&~= P \big\{ X - E(X| Y(u) = 1) \big\},
\$
where $P=\Sigma B(B^\T\Sigma B)^{-1}B^\T$ and the constant term $c_1(u)$ vanishes. Realizing that by the time-invariant covariance condition, $P$ remains the same across all time points, plugging in the above equation into \eqref{eq:dN} leads to
\$
&Q~E\bigg(\! \int \big[ X - E\{X| Y(u) = 1\} \big] \varphi^\T(u) dN(u)\! \bigg) \!=\!0,
\$
where $Q=I-P$. This is equivalent to deriving the left-singular space of the covariance matrix
\#\label{eq:CPSIR}
E\bigg(\! \int \big[ X - E\{X| Y(u) = 1\} \big] \varphi^\T(u) dN(u)\! \bigg).
\#
\end{example}
The computation of this approach is extremely simple. Realizing that $dN(u)$ takes value 1 at at most one time point on the entire time domain, which corresponds to the failure subjects, the covariance form can be estimated by a sum of $n$ terms. Then we perform singular value decomposition on this sample covariance matrix and obtain its leading left singular vectors, hence no optimization is required. Details are provided in Algorithm \eqref{alg:cpsir}.

\begin{remark}
The two conditions imposed in this example are somehow restrictive and do not always hold. For example, since $Y(u)$ is a process that depends on both the failure and censoring distribution, as long as the censoring distribution depends on structures beyond $B^\T X$, the conditions could be violated. However, many recent works of literature argue that the sliced inverse regression seems to still have satisfactory performances even when the linearity condition do not hold \citep{li2009dimension, dong2010dimension}. Hence, this does not prevent the method from serving as a good explorative tool. The method is also practically very useful since it is served as the initial value when solving our other optimization approaches to speed up the computation.
\end{remark}

\section{Implementation and Algorithm}\label{sec:alg}

The implementation of the computationally efficient method given in \eqref{eq:CPSIR} is straightforward since only sliced averaging and eigen-decomposition are required. Algorithm \eqref{alg:cpsir} summarizes the estimation procedure.

\begin{algorithm}[!h]
\caption{Algorithm for the computationally efficient approach.}\label{alg:cpsir}
\vspace*{-12pt}
\begin{tabbing}
   \enspace Input: $\{(X_i, \delta_i, Y_i), 1\leq i\leq n\}$, $h > 0$, $k>0$.\\
   \enspace Step 1: For each $Y_i$ such that $\delta_i = 1$, calculate $\widehat \varphi(Y_i)$ using Equation \eqref{eq:phi} and calculate\\
   \qquad $\widehat E(X | Y > Y_i)$ using $\widehat E(X | Y > u) = \textstyle \{ \sum_{i=1}^n I( Y_i > u ) \}^{-1} \{\sum_{i=1}^n X_i I( Y_i > u )\}.$\\
   \enspace Step 2: Calculate $\widehat{M} = n^{-1} \sum_{\delta_i = 1} \{ X_i - \widehat E(X | Y_i) \} \widehat \varphi^\T(Y_i)$.\\
   \enspace Step 3: Perform the singular value decomposition: $\widehat{M} = \widehat U\widehat D\widehat V^\T$.\\
\enspace Output: $\widehat B$ as the first $k$ columns of $\widehat U$.
\end{tabbing}
\end{algorithm}

It requires numerical optimization to solve the estimating equations of the forward regression approach given in \eqref{eq:FR1} and the two inverse regression approaches, given in \eqref{eq:semi} and \eqref{eq:dN}, respectively. For all three approaches, we use the corresponding choice of the moment conditions and solve for the minimizer of ${\widehat \psi}_n(B)^\T {\widehat \psi}_n(B)$, where ${\widehat \psi}_n(B)$ is specified in \eqref{eeq:FR1}, \eqref{eeq:IR-Semi} and \eqref{eeq:IR-CP} respectively. Existing methods use general-purpose optimization tools such as the Newton--Raphson to solve for the minimizer, however, dimension reduction methods create an additional difficulty due to the identifiability issue, i.e., $B$ is not uniquely defined and the rank may not be preserved if we solve it freely within the space of $\mathbb{R}^{p\times d}$. To tackle this, \cite{ma2012semiparametric} propose to set the upper block (or a selected set of $d$ rows) of $B$ as a diagonal matrix and solve for the rest of parameters. However, this requires the pre-knowledge of the location of the important variables. Instead, we propose an orthogonality constrained optimization approach to solve our semiparametric dimension reduction model within the Stiefel manifold \citep{edelman1998geometry}:
\#\label{optm:B}
\text{minimize} \quad & {\widehat \psi}_n(B)^\T {\widehat \psi}_n(B), \nonumber \\
\text{subject to} \quad & B^\T B = I_{d \times d}.
\#
The advantage of this optimization approach is that we exactly preserves the rank $d$ of the column space defined $B$ while not pre-specify the restrictions on any of it entries. The main machinery of this algorithm is the optimization approached proposed by \cite{wen2013feasible}. The method is a first-order descent algorithm that preserves the update of the parameters within the manifold. In particular, let the gradient matrix be defined as
\#\label{eq:gradient}
G = \frac{\partial \, {\widehat \psi}_n(B)^\T {\widehat \psi}_n(B) }{\partial B }.
\#
Then, utilizing the Cayley transformation, we can update $B$ to
\#\label{eq:update}
B(\tau_0) &= \Big(I + \frac{\tau_0}{2} A \Big)^{-1} \Big(I - \frac{\tau_0}{2} A \Big) B,
\#
where $A = G B^\T - B G^\T$ is a skew-symmetric matrix, and $\tau_0$ is a step size. In practice, $\tau_0$ can be chosen using inexact line search by incorporating the Wolfe conditions \citep{nocedal2006sequential}. It can be easily verified that if $B^\T B = I$, then $B(\tau_0)^\T B(\tau_0) = I$ for any $\tau_0> 0$, which preserves the constraint exactly. This approach is in-line with traditional dimensional reduction methods which recover the column space of $B$ rather than treating each entry as a fixed parameter. Moreover, if an upper block diagonal version is desired, we can easily convert the obtained solution through linear transformations. However, in this case, we can select the largest entries in the estimated $\widehat B$ as the location of the diagonal matrix, instead of pre-specifying the locations. The full algorithm is presented in Algorithm \ref{alg:2}. The iteration is stopped when a pre-specified optimization precision $\varepsilon_0$ is reached. For estimating the nonparametric components \eqref{eq:chf} and \eqref{eq:condx}, we exploit the Gaussian kernel and choose the optimal bandwidth $h = \big(4/(d+2)\big)^{1/(d+4)}n^{-1/(d+4)}\widehat\sigma$ \citep{silverman1986density}, where $\widehat\sigma$ is the estimated standard deviation. For estimating the conditional hazard function, equation \ref{eq:chf} bears much computational burden because it requires ${\cal O}(n^2)$ flops to calculate the hazard at any given $u$ and $z$. An alternative approach that greatly reduces the computational cost can be considered using the definition in \cite{dabrowska1989uniform}, given by the following:
\#
\widehat \lambda(u | B^\T X=z) &= \frac{\sum_{i=1}^n I\big(Y_i = u\big)I\big(\delta_i = 1\big) K_h\big(B^\T X_i - z\big)}{\sum_{j=1}^n I\big(Y_j \geq u\big) K_h\big(B^\T X_j - z\big)} \label{eq:chf2}
\#
Since the indicator $I\big(Y_i = u\big)$ can only take 1 if $u$ is among the observed survival times. Hence, the numerator essentially requires only a single flop. Based on our experience, the numerical performance of the two versions are very similar. Hence, the above definition is implemented and used in the simulation study. Lastly, the implementation is available through the R package ``orthoDr'' \citep{zhao2017orthoDr} through the Rcpp \citep{eddelbuettel2011Rcpp} interface.

\begin{algorithm}[!h]
\caption{The orthogonality constrained optimization algorithm.}\label{alg:2}
\vspace*{-12pt}
\begin{tabbing}
   \enspace Input: $\varepsilon_0, ~\{(X_i, \delta_i, Y_i), 1\leq i\leq n\}$.\\
   \enspace Initialize: Obtain $B^{(0)}$ from the computationally efficient approach in Algorithm \ref{alg:cpsir}.\\
   \enspace For $k = 1$ to $k =$ max.iter: \\
   \qquad Numerically approximate the gradient matrix G at $B^{(k)}$. \\
   \qquad Compute the skew-symmetric matrix matrix $A = G B^\T - B G^\T$. \\
   \qquad Perform line search for $\tau_0$ on the path $B(\tau_0) = \big(I + \frac{\tau_0}{2} A \big)^{-1} \big(I - \frac{\tau_0}{2} A \big) B$. \\
   \qquad Update $B^{(k+1)} = B(\tau_0)$.\\
   \qquad Stop if $\big\|B^{(k+1)}-B^{(k)}\big\|_2\leq \varepsilon_0$.\\
\enspace Output: $\widehat B=B^{(k+1)}$.
\end{tabbing}
\end{algorithm}

\section{Asymptotic Normality}\label{sec:theory}

We focus on the semiparametric inverse regression approach, in which $\widehat B$ obtained by solving
\$
\frac{1}{n}\vecc{\bigg[\sum_{i=1}^n \int_0^\tau \Big\{ \alpha(u,X_i) - \widehat\alpha^*(u, \widehat B^\T X_i)\Big\}d\widehat M(u, \widehat B^\T X_i)\bigg]}=0,
\$
To address the identifiability issue of $B$, we restrict our attention to the matrices in the form of  $B=(B_u^\T, B_\ell^\T)^\T$, where the upper sub-matrix $B_u=I_d\in \RR^{d\times d}$ is the $d\times d$ identity matrix. In this manner, we can view $B_\ell$ as the unique parameterizations of the subspace $\cS(B)$. We then write $\beta_\ell=\vecl(B)=\vecc(B_\ell)$, the vector concatenating all free parameters in $B$.     We need the  the following regularity assumptions.

\begin{assumption}\label{ass:1}
There exists a $\tau$, such that $0<\tau<\infty$ and $\textnormal{pr}(Y>\tau|X)>0$.
\end{assumption}


\begin{assumption}\label{ass:pdf}
Let $f_{B^\T X}(z)$ be the density  function of $B^\T X$ evaluated at $z=B^\T x$, $f(t, z)$ be the density of $T$ given $B^\T X=z$, $S(t, x)=\pr(T\geq t| X=x)$ and $S_c(t, x)=\pr(C\geq t | X=x)$.  Assume that $f(t,z)$, $f_{B^\T X}(z)$, $S(t, z)$ and $E\big(S_c(t, X)|z\big)$ are bounded, and have bounded  first and second derivatives with respect to $t$ and  $z$, and  $S(t, z)$ is bounded away from zero.
\end{assumption}

\begin{assumption}\label{ass:kernel}
The univariate kernel function $K(x)$ is symmetric with $\int x^2K(x)\dif x<\infty$. The $d$-dimensional kernel function is a product of $d$ univariate kernel functions, that is $K(u)=\prod K(u_j)$ for $u=(u_1, \ldots, u_d)^\T.$
\end{assumption}

Assumptions \ref{ass:1} and \ref{ass:pdf} are standard in survival analysis. Assumptions \ref{ass:kernel} is commonly used in kernel estimations. Base on these two assumptions, we can provide the rate for our conditional hazard function estimation and its derivatives. Its easy to see that the Silverman formula implemented in our numerical approach automatically leads to consistent estimations.

\begin{lemma}\label{lem:thm.1}
Under Assumption \ref{ass:kernel} and \ref{ass:pdf}, and assume that the bandwidths satisfy $h, b \rightarrow 0$, $nbh^{d+2} \rightarrow \infty$, we have, uniformly for all $t$ and $z$,
\$
\widehat\lambda(t| z)&=\lambda(t, z)+O_p\left(\big(nbh^d\big)^{-1/2}+h^2+b^2\right),~\textnormal{and}~\\
\frac{\partial}{\partial z}\widehat\lambda(t| z)&=\frac{\partial}{\partial z} \lambda(t, z)+O_p\left(\big(nbh^{d+2}\big)^{-1/2}+h^2+b^2\right).
\$
\end{lemma}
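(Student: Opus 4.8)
The plan is to treat $\widehat\lambda(t\mid z)$ as a ratio of two kernel-smoothed empirical quantities and to expand it via the standard ``ratio trick.'' Write
\[
\widehat\lambda(t\mid z)=\frac{\widehat g(t,z)}{\widehat f(t,z)},\qquad
\widehat g(t,z)=\frac{1}{n}\sum_{i=1}^n K_b(Y_i-t)\,\delta_i\,K_h(B^\T X_i-z),\qquad
\widehat f(t,z)=\frac{1}{n}\sum_{i=1}^n I(Y_i\ge t)\,K_h(B^\T X_i-z),
\]
and let $g(t,z),f(t,z)$ be their expectations. The first step is a bias--variance analysis of each of $\widehat g$ and $\widehat f$ separately. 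For the bias, a two-term Taylor expansion of the underlying densities together with the symmetry of $K$ (Assumption~\ref{ass:kernel}, which kills the first-order terms and gives the $\int x^2K(x)\,dx<\infty$ factor) yields $E\widehat g(t,z)-g(t,z)=O(h^2+b^2)$ and $E\widehat f(t,z)-f(t,z)=O(h^2)$, uniformly in $(t,z)$ by the boundedness of the second derivatives in Assumption~\ref{ass:pdf}. For the stochastic term, a standard kernel-smoothing variance bound gives $\var\{\widehat g(t,z)\}=O\{(nbh^d)^{-1}\}$ and $\var\{\widehat f(t,z)\}=O\{(nh^d)^{-1}\}$; to upgrade pointwise control to uniform control over $(t,z)$ I would invoke a uniform strong law / empirical-process maximal inequality for kernel estimators (e.g.\ the results of Einmahl--Mason or a bracketing argument over the VC class of indicator-times-kernel functions), obtaining the $O_p\{(nbh^d)^{-1/2}\}$ and $O_p\{(nh^d)^{-1/2}\}$ uniform rates. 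Combining, $\widehat g-g=O_p\{(nbh^d)^{-1/2}+h^2+b^2\}$ and $\widehat f-f=O_p\{(nh^d)^{-1/2}+h^2\}$ uniformly.

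The second step converts this into a rate for the ratio. Since $f(t,z)=E\{S_c(t,X)S(t,z)\mid z\}f_{B^\T X}(z)$ is bounded below on $t\le\tau$ by Assumptions~\ref{ass:1} and~\ref{ass:pdf}, and $\widehat f$ is uniformly consistent, $\widehat f$ is bounded away from $0$ with probability tending to one, so I may write
\[
\widehat\lambda-\lambda
=\frac{\widehat g-g}{\widehat f}-\lambda\,\frac{\widehat f-f}{\widehat f}
=\frac{1}{f}\big\{(\widehat g-g)-\lambda(\widehat f-f)\big\}+(\text{higher-order remainder}),
\]
and the remainder is of smaller order because it carries an extra factor of $(\widehat f-f)$. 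Noting $nbh^d\le nh^d$, the dominating contribution is $O_p\{(nbh^d)^{-1/2}+h^2+b^2\}$, which is the first claim.

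The third step is the derivative bound. Here $\partial_z\widehat\lambda=(\partial_z\widehat g)/\widehat f-\widehat g(\partial_z\widehat f)/\widehat f^2$, and $\partial_z\widehat g$, $\partial_z\widehat f$ are themselves kernel estimators but with the kernel $K_h$ replaced by its derivative $h^{-1}K_h'$, which costs one extra power of $h$ in the variance: $\var\{\partial_z\widehat g\}=O\{(nbh^{d+2})^{-1}\}$, $\var\{\partial_z\widehat f\}=O\{(nh^{d+2})^{-1}\}$, while the bias stays $O(h^2+b^2)$ by the assumed bounded second derivatives. Repeating the ratio expansion — now also using the already-established uniform rate for $\widehat\lambda$ itself and the lower bound on $\widehat f$ — gives $\partial_z\widehat\lambda-\partial_z\lambda=O_p\{(nbh^{d+2})^{-1/2}+h^2+b^2\}$; the condition $nbh^{d+2}\to\infty$ is exactly what makes this $o_p(1)$.

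The main obstacle is the \emph{uniformity} in $(t,z)$ rather than the pointwise rates, which are routine. One must control $\sup_{t,z}$ of empirical-process fluctuations for the classes $\{(y,\delta,x)\mapsto K_b(y-t)\delta K_h(B^\T x-z)\}$ and $\{(y,x)\mapsto I(y\ge t)K_h(B^\T x-z)\}$ (and their $z$-derivative analogues), for which I would appeal to the VC-subgraph / bracketing structure of these classes together with a Talagrand-type maximal inequality, tracking how the bandwidths $h,b$ enter the envelope and the entropy bound; a minor additional subtlety is ensuring the denominator bound $\inf_{t\le\tau,z}\widehat f(t,z)>0$ holds uniformly, which follows once the uniform convergence of $\widehat f$ is in hand.
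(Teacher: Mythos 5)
Your proposal is correct and follows essentially the same route as the paper's proof: a bias--variance analysis of the kernel numerator and denominator (second-order Taylor expansion with a symmetric kernel for the $O(h^2+b^2)$ bias, $O\{(nbh^d)^{-1}\}$ variance), a ratio linearization using the uniform control of the denominator's deviation (the paper's $A_i$ term, with uniformity obtained from a Mack--Silverman-type argument rather than your Talagrand/VC formulation), and the same extra factor of $h^{-2}$ in the variance for the $z$-derivative. The only cosmetic difference is that the paper expands the estimator as (numerator over population denominator)$\times\{1+O(A_i)\}$ instead of your two-term $\widehat g/\widehat f$ linearization.
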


Before presenting our main theorem, we also need the convergence result of the $\alpha^*$ functions. However, we do not want the theoretical result being limited to the choice given in equation \eqref{ex:Semi}. Instead, we provide general results for any valid choice of the $\alpha^*$ function, as long as the following condition is satisfied.

\begin{assumption}\label{ass:4}
We assume that for some $\kappa<1/2$, the convergence rate for the following conditional nonparametric estimation  holds uniformly over all $u$ and $z$,
\begin{gather*}
\vecc \Big\{\widehat\alpha^*\big(u, z\big)-\alpha^*\big(u,z\big)\Big\}  ={\rm O}_p\left(n^{-{1}/{2}+\kappa}\right),\\
\frac{\partial }{\partial z}\vecc \Big\{\widehat\alpha^*\big(u, z\big)-\alpha^*\big(u,z\big)\Big\} ={\rm O}_p\left(n^{-{1}/{2}+\kappa}\right),
\end{gather*}
\end{assumption}

Note that, for most valid choices such as a kernel estimation of the conditional density, when the number of dimension $d$ is fixed, the rate provide in Lemma \ref{lem:thm.1} is essentially valid for $\widehat\alpha^*\big(u, z\big)$, while for conditional expectation estimations, the classical rate of $O_p\big((nh^d)^{-1/2}+h^2\big)$ can obtained. Hence, with proper choice of the bandwidth, the rate of Assumption \ref{ass:4} can always be achieved. We now present the main theorem.

\begin{theorem}[Asymptotic Normality]\label{thm1}
Under Assumptions \ref{ass:1}-\ref{ass:4}, and the choice of bandwidths specified in Lemma 1, the estimator $\vecl (\widehat B )$ 
is asymptotically normal, that is
\$\sqrt{n} \, \left(\widehat\beta_\ell-\beta_\ell\right)  \xrightarrow{d} \cN (0,\Sigma),
\$ where $\Sigma=(G^\T G)^{-1} G\Sigma_AG^\T (G^\T G)^{-1}$,
\begin{gather*}
\Sigma_A=\cov \big(A(\tau)\big)=\cov\left(\int_0^\tau\vecc\left\{\alpha(u, X)-\alpha^*(u,  B^\T X)\right\}dM(u, B^\T X)\right),\\
\textnormal{and}\,\, G=E\left(\frac{\partial }{\partial \beta_\ell}{\int_0^\tau \vecc{\left[ \big\{\alpha(u, X_i)- \alpha^*(u,  B^\T X) \big\}\right]} d  M(u,  B^\T X)   }\right).
\end{gather*}

\end{theorem}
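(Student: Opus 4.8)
The plan is to view $\widehat\beta_\ell$ as a $Z$-estimator defined by the estimating equation $\psi_n(\widehat B)=0$ with an estimated nuisance parameter, and to carry out a standard one-step Taylor expansion of this equation around the truth $\beta_\ell$, controlling the contribution of the nuisance estimation error. First I would establish consistency of $\widehat\beta_\ell$ (this presumably follows from an identifiability argument on the population estimating equation together with a uniform law of large numbers, using Assumptions~\ref{ass:1}--\ref{ass:4}; I would likely defer the details or cite a companion argument). With consistency in hand, I would write
\$
0=\frac{1}{n}\sum_{i=1}^n\int_0^\tau\vecc\!\Big\{\alpha(u,X_i)-\widehat\alpha^*(u,\widehat B^\T X_i)\Big\}\,d\widehat M(u,\widehat B^\T X_i)
\$
and expand the right-hand side around $(\beta_\ell,\alpha^*,\lambda)$. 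The linear term in $\widehat\beta_\ell-\beta_\ell$ produces the matrix $G$ (the population derivative displayed in the theorem); the zeroth-order term is the empirical process $n^{-1}\sum_i A_i(\tau)$ evaluated at the true parameters, which by the martingale central limit theorem (or simply the ordinary CLT for i.i.d.\ sums, since the $A_i(\tau)$ are i.i.d.\ with mean zero by the unbiasedness in \eqref{eq:eepop}) is asymptotically $\cN(0,\Sigma_A)$; and the remaining terms come from plugging in $\widehat\alpha^*$ for $\alpha^*$ and $\widehat\lambda$ for $\lambda$.

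The crux is to show that the nuisance-estimation terms are asymptotically negligible, i.e.\ $o_p(n^{-1/2})$, so that they do not contaminate the limiting variance. Here I would use the orthogonality built into the construction: because $\alpha(u,X)-\alpha^*(u,B^\T X)$ lies in $\cE^\perp$, the estimating function is, to first order, insensitive to perturbations of $\alpha^*$ along the nuisance tangent space, so the first-order bias from replacing $\alpha^*$ by $\widehat\alpha^*$ vanishes and only a second-order (product-of-errors) term survives. Combining the rate $\widehat\alpha^*-\alpha^*=O_p(n^{-1/2+\kappa})$ from Assumption~\ref{ass:4} with the companion rate for $\widehat\lambda-\lambda$ and $\partial_z\widehat\lambda-\partial_z\lambda$ from Lemma~\ref{lem:thm.1}, and using $\kappa<1/2$ together with the bandwidth conditions, the cross term is $o_p(n^{-1/2})$. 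I would also need uniform control (over $u$ and over $z$ in a neighborhood of $B^\T x$) of these nonparametric estimates and their $z$-derivatives — this is exactly why Lemma~\ref{lem:thm.1} and Assumption~\ref{ass:4} are stated uniformly — together with the boundedness and smoothness guarantees of Assumption~\ref{ass:pdf} and the bounded-support/at-risk condition of Assumption~\ref{ass:1} to keep denominators away from zero.

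The remaining steps are routine: invoke a stochastic equicontinuity / Donsker-type argument (or a bracketing-entropy bound, since the relevant function classes are indexed by a finite-dimensional $\beta_\ell$ and smooth kernels) to replace $n^{-1}\sum_i \partial_{\beta_\ell}[\cdots]$ evaluated near $\widehat\beta_\ell$ by its population limit $G$, show $G$ has full column rank (needed for $(G^\T G)^{-1}$ to exist, which should follow from the non-degeneracy of the central subspace and Assumption~\ref{ass:pdf}), and then solve the expansion to get
\$
\sqrt n\,(\widehat\beta_\ell-\beta_\ell)=-(G^\T G)^{-1}G^\T\cdot\frac{1}{\sqrt n}\sum_{i=1}^n A_i(\tau)+o_p(1),
\$
whence Slutsky's theorem and the CLT give $\cN\big(0,(G^\T G)^{-1}G^\T\Sigma_A G(G^\T G)^{-1}\big)$. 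The main obstacle I anticipate is the nuisance-negligibility step: carefully verifying that the double-robustness/orthogonality structure indeed kills the first-order nuisance bias when the nuisance is estimated by $d$-dimensional kernel smoothing rather than at the parametric rate, and that the surviving second-order term is controlled uniformly — this is where the precise interplay between $\kappa<1/2$, the bandwidth rates $nbh^{d+2}\to\infty$, and the $z$-derivative bounds in Lemma~\ref{lem:thm.1} all have to fit together.
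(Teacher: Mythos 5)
Your proposal is correct and follows essentially the same route as the paper: a uniform asymptotic-linearity expansion of the estimating equation around $\beta_\ell$, with the nuisance contributions ($\widehat\alpha^*-\alpha^*$ and $\widehat\lambda-\lambda$) shown to be $o_p(n^{-1/2})$ because the orthogonality $E\{\alpha-\alpha^*\mid\cF_u,B^\T X\}=0$ reduces them to products of two vanishing factors, each of order $O_p(n^{-1/2+\kappa})\cdot O_p(n^{-1/2})$ with $\kappa<1/2$; the leading term $n^{-1/2}\sum_i A_i(\tau)$ then yields $\cN(0,\Sigma_A)$ and the sandwich variance follows. Your additional remarks on consistency, stochastic equicontinuity for replacing $G_n$ by $G$, and the full-rank condition on $G$ are points the paper treats only implicitly, but they do not constitute a different argument.
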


\section{Numerical Examples}\label{sec:num}
\subsection{Simulation Studies}
In this section, we examine the finite sample performance of our proposed methods via extensive numerical experiments. Specifically, we carry out the estimation of dimension reduction subspace using the forward regression approach \eqref{eq:FR1}, the semiparametric inverse regression approach \eqref{eq:semi}, the counting process inverse regression approach \eqref{eq:dN} and the computational efficient approach \eqref{eq:CPSIR}. All of our methods are implemented through the ``orthoDr'' package in R. Four alternative approaches are considered: a naive approach that performs sliced inverse regression on the failure observations, carried out using the ``dr'' package \citep{weisberg2002dimension}; the double slicing approach \citep{li1999dimension} using R package ``censorSIR'' provided by \cite{wu2008method}; the minimal average variance estimation based on hazard functions in \cite{xia2010dimension}. This implementation is provided by the original author through MATLAB; and the inverse probably of censoring weighted approach based on \cite{lu2011sufficient}. We carry out this approach ourselves by using a Cox proportional hazard model to estimate the censoring weights and obtain the reduced space by utilizing the ``dr'' package with subject weights.

We consider four different settings: Setting $1$ is a classical Cox proportional hazard model; Setting $2$ is constructed with structural dimension $d\!=\!2$ and directions in the hazard function are changing over time; Setting $3$ also has structural dimension equal to two, with the two directions interacting with each other. Setting $4$ also has two interacting structural dimensions, while the failure and the censoring variables have overlap. For each setting, we consider $p = 6$, $12$ and $18$. Each experiment is repeated $200$ times with sample size $n\!=\!400$.

Setting $1$: 
The true survival time $T$ and the censoring time $C$ are generated from exponential distributions with rate $\exp(\beta^\T X)$ and $\exp(X_4 \!+\! X_5\!-\! 1)$ respectively, where $\beta = (1, 0$$\cdot$$5, 0, \ldots ,0)^\T$ and $X_j$ is the $j$-th element of $X$, for $1\leq j\leq p$. The covariate $X$ follows from multivariate normal distribution with mean $0$ and covariance $\Sigma=\big(0$$\cdot$$5^{|i-j|}\big)_{ij}$. 
The overall censoring rate is around 35$\cdot$3\%.

Setting $2$:  
We generate $T_1$ and $T_2$ from exponential distributions with rate $\exp(\beta_1^\T X)$ and $\exp(\beta_2^\T X)$ respectively, where $\beta_1 = (1, 0, 1, 0, ... ,0)^\T$ and $\beta_2 = (0, 1, 0, 1, 0, \ldots,0)^\T$. The true survival time $T = T_1I(T_1 < 0$$\cdot$$4) + (T_2 + 0$$\cdot$$4)I(T_1 \geq 0$$\cdot$$4)$. The censoring time $C$ is generated from exponential distributions with rate $\exp(X_5 - X_6 - 2)$. The covariate $X$ follows the same distribution as in Setting 1. The overall censoring rate is around 35$\cdot$1\%.

Setting $3$:  
The true survival time $T$ is generated from Weibull distribution with shape parameter $5$ and scale parameter $\exp(4\beta_2^\T X (\beta_1^\T X - 1))$, where $\beta_1 = (1, 0, 1, 0, ... ,0)^\T$ and $\beta_2 = (0, 1, 0, 1, 0, \ldots,0)^\T$. The censoring time $C$ is generated uniformly from 0 to $3 \exp(X_5 - X_6 + 0$$\cdot$$5)$. We further draw $X$ such that $X_j$'s follow standard uniform distribution $U(0,1)$ independently. The overall censoring rate is around 33$\cdot$8\%.

Setting $4$: The true survival time $T$ is generated from a Cox proportional hazard model with $\log(T) = -2$$\cdot$$5 + \beta_1^\T X + 0$$\cdot$$5 \beta_1^\T X \beta_2^\T X + 0$$\cdot$$25 \log(-\log( 1- u))$ and $\log(C) = -0$$\cdot$$5 + \beta_3^\T X + \log(-\log( 1- u))$, where $u$'s are i.i.d. uniform distributed, $\beta_1 = (1, 1, 0, \ldots, 0)^\T$, $\beta_2 = (0, 0, 1, -1, 0, \ldots, 0)^\T$, and $\beta_3 = (0, 1, 0, 1, 1, 1, 0, \ldots, 0)^\T$. The covariate $X$ follows the same distribution as setting 1, except $\Sigma=\big(0$$\cdot$$25^{|i-j|}\big)$. The overall censoring rate is around 26$\cdot$2\%.

\begin{table}[t]
\def~{\hphantom{0}}
\setlength\extrarowheight{-5pt}
\caption{{Simulation results: Mean ($\times 10^2$) and standard deviations ($\times 10^2$, in parenthesis) of Frobenius norm distance (Forb), trace correlation (Tr) and canonical correlation (CCor).}}{\scriptsize
\begin{tabular}{l lllllllll}
Setting 1 (d=1) & \multicolumn{3}{c}{$p=6$} & \multicolumn{3}{c}{$p=12$} & \multicolumn{3}{c}{$p=18$}\\
& \multicolumn{1}{c}{Frob} & \multicolumn{1}{c}{Tr} & \multicolumn{1}{c}{CCor} & \multicolumn{1}{c}{Frob} & \multicolumn{1}{c}{Tr} & \multicolumn{1}{c}{CCor}& \multicolumn{1}{c}{Frob} & \multicolumn{1}{c}{Tr} & \multicolumn{1}{c}{CCor} \\
Naive   &	54	(\!\!	12	\!\!) &	85	~(\!\!	6	\!\!) &	94	~(\!\!	3	\!\!) &	66	(\!\!	12	\!\!) &	78	~(\!\!	8	\!\!) &	92	~(\!\!	3	\!\!) &	73	(\!\!	11	\!\!) &	73	~(\!\!	8	\!\!) &	91	~(\!\!	3	\!\!) \\	
DS      &	33	(\!\!	10	\!\!) &	94	~(\!\!	4	\!\!) &	98	~(\!\!	2	\!\!) &	46	(\!\!	11	\!\!) &	89	~(\!\!	5	\!\!) &	97	~(\!\!	2	\!\!) &	53	(\!\!	10	\!\!) &	85	~(\!\!	5	\!\!) &	96	~(\!\!	2	\!\!) \\	
IPCW-SIR&	64	(\!\!	13	\!\!) &	78	~(\!\!	9	\!\!) &	91	~(\!\!	4	\!\!) &	75	(\!\!	11	\!\!) &	71	~(\!\!	9	\!\!) &	89	~(\!\!	4	\!\!) &	80	(\!\!	11	\!\!) &	68	~(\!\!	9	\!\!) &	89	~(\!\!	4	\!\!) \\	
hMave	&	68	(\!\!	12	\!\!) &	76	~(\!\!	8	\!\!) &	86	~(\!\!	5	\!\!) &	73	(\!\!	11	\!\!) &	73	~(\!\!	8	\!\!) &	86	~(\!\!	5	\!\!) &	79	(\!\!	10	\!\!) &	68	~(\!\!	8	\!\!) &	84	~(\!\!	5	\!\!) \\	
Forward	&	21	~(\!\!	6	\!\!) &	98	~(\!\!	1	\!\!) &	99	~(\!\!	0	\!\!) &	33	~(\!\!	8	\!\!) &	94	~(\!\!	3	\!\!) &	99	~(\!\!	1	\!\!) &	39	~(\!\!	7	\!\!) &	92	~(\!\!	3	\!\!) &	98	~(\!\!	1	\!\!) \\	
CP-SIR	&	26	~(\!\!	9	\!\!) &	96	~(\!\!	3	\!\!) &	99	~(\!\!	1	\!\!) &	40	(\!\!	10	\!\!) &	91	~(\!\!	4	\!\!) &	98	~(\!\!	1	\!\!) &	49	~(\!\!	9	\!\!) &	88	~(\!\!	4	\!\!) &	97	~(\!\!	1	\!\!) \\	
IR-CP	&	23	~(\!\!	7	\!\!) &	97	~(\!\!	2	\!\!) &	99	~(\!\!	0	\!\!) &	35	~(\!\!	8	\!\!) &	94	~(\!\!	3	\!\!) &	98	~(\!\!	1	\!\!) &	41	~(\!\!	7	\!\!) &	91	~(\!\!	3	\!\!) &	98	~(\!\!	1	\!\!) \\	
IR-Semi	&	23	~(\!\!	8	\!\!) &	97	~(\!\!	2	\!\!) &	99	~(\!\!	0	\!\!) &	37	~(\!\!	8	\!\!) &	93	~(\!\!	3	\!\!) &	98	~(\!\!	1	\!\!) &	44	~(\!\!	8	\!\!) &	90	~(\!\!	4	\!\!) &	98	~(\!\!	1	\!\!) \\	
{\smallskip\vspace{-0.1in}}\\
Setting 2 (d=2) & \multicolumn{3}{c}{$p=6$} & \multicolumn{3}{c}{$p=12$} & \multicolumn{3}{c}{$p=18$}\\
& \multicolumn{1}{c}{Frob} & \multicolumn{1}{c}{Tr} & \multicolumn{1}{c}{CCor} & \multicolumn{1}{c}{Frob} & \multicolumn{1}{c}{Tr} & \multicolumn{1}{c}{CCor}& \multicolumn{1}{c}{Frob} & \multicolumn{1}{c}{Tr} & \multicolumn{1}{c}{CCor} \\
Naive	&	67	(\!\!	19	\!\!) &	88	~(\!\!	6	\!\!) &	96	~(\!\!	3	\!\!) &	87	(\!\!	18	\!\!) &	80	~(\!\!	8	\!\!) &	91	~(\!\!	6	\!\!) &	\!\!\!106	(\!\!	17	\!\!) &	71	~(\!\!	9	\!\!) &	87	~(\!\!	6	\!\!) \\
DS      &	44	(\!\!	13	\!\!) &	95	~(\!\!	3	\!\!) &	98	~(\!\!	1	\!\!) &	68	(\!\!	15	\!\!) &	88	~(\!\!	5	\!\!) &	94	~(\!\!	3	\!\!) &	84	(\!\!	11	\!\!) &	82	~(\!\!	5	\!\!) &	92	~(\!\!	3	\!\!) \\
IPCW-SIR&	83	(\!\!	19	\!\!) &	82	~(\!\!	8	\!\!) &	94	~(\!\!	3	\!\!) &	98	(\!\!	17	\!\!) &	75	~(\!\!	9	\!\!) &	90	~(\!\!	6	\!\!) &	\!\!\!114	(\!\!	16	\!\!) &	67	~(\!\!	9	\!\!) &	86	~(\!\!	8	\!\!) \\
hMave	&	\!\!\!114	(\!\!	31	\!\!) &	65	(\!\!	16	\!\!) &	74	(\!\!	16	\!\!) &	\!\!\!139	(\!\!	19	\!\!) &	51	(\!\!	12	\!\!) &	64	(\!\!	12	\!\!) &	\!\!\!151	(\!\!	14	\!\!) &	43	(\!\!	10	\!\!) &	59	(\!\!	10	\!\!) \\
Forward	&	\!\!\!102	~(\!\!	1	\!\!) &	49	~(\!\!	1	\!\!) &	\!\!\!100	~(\!\!	0	\!\!) &	\!\!\!105	~(\!\!	2	\!\!) &	48	~(\!\!	1	\!\!) &	99	~(\!\!	0	\!\!) &	\!\!\!107	~(\!\!	2	\!\!) &	46	~(\!\!	1	\!\!) &	99	~(\!\!	0	\!\!) \\
CP-SIR	&	37	(\!\!	11	\!\!) &	96	~(\!\!	2	\!\!) &	98	~(\!\!	1	\!\!) &	61	(\!\!	12	\!\!) &	90	~(\!\!	4	\!\!) &	96	~(\!\!	2	\!\!) &	78	(\!\!	10	\!\!) &	85	~(\!\!	4	\!\!) &	93	~(\!\!	2	\!\!) \\
IR-CP	&	49	(\!\!	19	\!\!) &	93	~(\!\!	6	\!\!) &	96	~(\!\!	3	\!\!) &	73	(\!\!	20	\!\!) &	86	~(\!\!	8	\!\!) &	92	~(\!\!	4	\!\!) &	90	(\!\!	17	\!\!) &	79	~(\!\!	8	\!\!) &	89	~(\!\!	5	\!\!) \\
IR-Semi	&	39	(\!\!	14	\!\!) &	96	~(\!\!	3	\!\!) &	98	~(\!\!	2	\!\!) &	65	(\!\!	16	\!\!) &	89	~(\!\!	6	\!\!) &	94	~(\!\!	3	\!\!) &	83	(\!\!	15	\!\!) &	82	~(\!\!	6	\!\!) &	91	~(\!\!	3	\!\!) \\
{\smallskip\vspace{-0.1in}}\\
Setting 3 (d=2) & \multicolumn{3}{c}{$p=6$} & \multicolumn{3}{c}{$p=12$} & \multicolumn{3}{c}{$p=18$}\\
& \multicolumn{1}{c}{Frob} & \multicolumn{1}{c}{Tr} & \multicolumn{1}{c}{CCor} & \multicolumn{1}{c}{Frob} & \multicolumn{1}{c}{Tr} & \multicolumn{1}{c}{CCor}& \multicolumn{1}{c}{Frob} & \multicolumn{1}{c}{Tr} & \multicolumn{1}{c}{CCor} \\
Naive	&	72	(\!\!	23	\!\!) &	86	~(\!\!	9	\!\!) &	96	~(\!\!	5	\!\!) &	99	(\!\!	22	\!\!) &	74	(\!\!	11	\!\!) &	88	(\!\!	11	\!\!) &	\!\!\!116	(\!\!	18	\!\!) &	66	(\!\!	10	\!\!) &	82	(\!\!	13	\!\!) \\	
DS      &	40	(\!\!	14	\!\!) &	95	~(\!\!	3	\!\!) &	99	~(\!\!	1	\!\!) &	60	(\!\!	13	\!\!) &	91	~(\!\!	4	\!\!) &	97	~(\!\!	3	\!\!) &	73	(\!\!	15	\!\!) &	86	~(\!\!	6	\!\!) &	95	~(\!\!	5	\!\!) \\	
IPCW-SIR&	\!\!\!113	(\!\!	26	\!\!) &	66	(\!\!	13	\!\!) &	81	(\!\!	14	\!\!) &	\!\!\!129	(\!\!	15	\!\!) &	58	~(\!\!	9	\!\!) &	74	(\!\!	12	\!\!) &	\!\!\!133	(\!\!	11	\!\!) &	55	~(\!\!	7	\!\!) &	74	(\!\!	12	\!\!) \\	
hMave	&	40	(\!\!	18	\!\!) &	95	~(\!\!	6	\!\!) &	99	~(\!\!	3	\!\!) &	66	(\!\!	27	\!\!) &	87	(\!\!	12	\!\!) &	94	~(\!\!	9	\!\!) &	89	(\!\!	29	\!\!) &	78	(\!\!	14	\!\!) &	89	(\!\!	12	\!\!) \\	
Forward	&	\!\!\!100	~(\!\!	0	\!\!) &	50	~(\!\!	0	\!\!) &	\!\!\!100	~(\!\!	0	\!\!) &	\!\!\!100	~(\!\!	0	\!\!) &	50	~(\!\!	0	\!\!) &	\!\!\!100	~(\!\!	0	\!\!) &	\!\!\!101	~(\!\!	0	\!\!) &	50	~(\!\!	0	\!\!) &	\!\!\!100	~(\!\!	0	\!\!) \\	
CP-SIR	&	34	(\!\!	11	\!\!) &	97	~(\!\!	2	\!\!) &	99	~(\!\!	1	\!\!) &	55	(\!\!	11	\!\!) &	92	~(\!\!	3	\!\!) &	97	~(\!\!	2	\!\!) &	67	(\!\!	11	\!\!) &	88	~(\!\!	4	\!\!) &	96	~(\!\!	3	\!\!) \\	
IR-CP	&	30	(\!\!	14	\!\!) &	97	~(\!\!	3	\!\!) &	99	~(\!\!	1	\!\!) &	46	(\!\!	14	\!\!) &	94	~(\!\!	4	\!\!) &	99	~(\!\!	1	\!\!) &	58	(\!\!	15	\!\!) &	91	~(\!\!	5	\!\!) &	97	~(\!\!	4	\!\!) \\	
IR-Semi	&	19	~(\!\!	8	\!\!) &	99	~(\!\!	1	\!\!) &	\!\!\!100	~(\!\!	0	\!\!) &	29	~(\!\!	8	\!\!) &	98	~(\!\!	1	\!\!) &	\!\!\!100	~(\!\!	0	\!\!) &	40	(\!\!	11	\!\!) &	96	~(\!\!	2	\!\!) &	99	~(\!\!	1	\!\!) \\	
{\smallskip\vspace{-0.1in}}\\
Setting 4 (d=2) & \multicolumn{3}{c}{$p=6$} & \multicolumn{3}{c}{$p=12$} & \multicolumn{3}{c}{$p=18$}\\
& \multicolumn{1}{c}{Frob} & \multicolumn{1}{c}{Tr} & \multicolumn{1}{c}{CCor} & \multicolumn{1}{c}{Frob} & \multicolumn{1}{c}{Tr} & \multicolumn{1}{c}{CCor}& \multicolumn{1}{c}{Frob} & \multicolumn{1}{c}{Tr} & \multicolumn{1}{c}{CCor} \\
Naive	&	33	~(\!\!	9	\!\!) &	97	~(\!\!	2	\!\!) &	99	~(\!\!	1	\!\!) &	52	(\!\!	10	\!\!) &	93	~(\!\!	3	\!\!) &	97	~(\!\!	1	\!\!) &	66	(\!\!	10	\!\!) &	89	~(\!\!	4	\!\!) &	95	~(\!\!	2	\!\!) \\
DS      &	49	(\!\!	12	\!\!) &	94	~(\!\!	3	\!\!) &	95	~(\!\!	3	\!\!) &	62	(\!\!	11	\!\!) &	90	~(\!\!	4	\!\!) &	94	~(\!\!	3	\!\!) &	71	(\!\!	11	\!\!) &	87	~(\!\!	4	\!\!) &	93	~(\!\!	3	\!\!) \\
IPCW-SIR&	35	~(\!\!	9	\!\!) &	97	~(\!\!	2	\!\!) &	98	~(\!\!	1	\!\!) &	52	(\!\!	10	\!\!) &	93	~(\!\!	3	\!\!) &	97	~(\!\!	1	\!\!) &	64	(\!\!	10	\!\!) &	89	~(\!\!	3	\!\!) &	95	~(\!\!	2	\!\!) \\
hMave	&	\!\!\!142	~(\!\!	3	\!\!) &	50	~(\!\!	2	\!\!) &	59	~(\!\!	4	\!\!) &	\!\!\!145	~(\!\!	5	\!\!) &	47	~(\!\!	4	\!\!) &	57	~(\!\!	4	\!\!) &	\!\!\!149	~(\!\!	7	\!\!) &	45	~(\!\!	6	\!\!) &	55	~(\!\!	6	\!\!) \\
Forward	&	\!\!\!101	~(\!\!	0	\!\!) &	50	~(\!\!	0	\!\!) &	\!\!\!100	~(\!\!	0	\!\!) &	\!\!\!102	~(\!\!	1	\!\!) &	49	~(\!\!	0	\!\!) &	99	~(\!\!	0	\!\!) &	\!\!\!102	~(\!\!	1	\!\!) &	49	~(\!\!	0	\!\!) &	99	~(\!\!	0	\!\!) \\
CP-SIR	&	36	~(\!\!	7	\!\!) &	97	~(\!\!	1	\!\!) &	98	~(\!\!	1	\!\!) &	51	~(\!\!	8	\!\!) &	93	~(\!\!	2	\!\!) &	97	~(\!\!	1	\!\!) &	63	~(\!\!	8	\!\!) &	90	~(\!\!	3	\!\!) &	95	~(\!\!	1	\!\!) \\
IR-CP	&	22	~(\!\!	9	\!\!) &	99	~(\!\!	2	\!\!) &	99	~(\!\!	1	\!\!) &	42	(\!\!	15	\!\!) &	95	~(\!\!	4	\!\!) &	98	~(\!\!	2	\!\!) &	57	(\!\!	17	\!\!) &	91	~(\!\!	5	\!\!) &	96	~(\!\!	3	\!\!) \\
IR-Semi	&	13	~(\!\!	5	\!\!) &	99	~(\!\!	0	\!\!) &	\!\!\!100	~(\!\!	0	\!\!) &	24	~(\!\!	7	\!\!) &	98	~(\!\!	1	\!\!) &	99	~(\!\!	0	\!\!) &	34	(\!\!	10	\!\!) &	97	~(\!\!	2	\!\!) &	99	~(\!\!	1	\!\!)
\end{tabular}}\label{tab:sim}
{\smallskip\vspace{0.1in}}\\
\scriptsize DS: \cite{li1999dimension}; IPCW-SIR: \cite{lu2011sufficient}; hMave: \cite{xia2010dimension}; Forward: forward regression; CP-SIR: the computational efficient approach; IR-CP: the counting process inverse regression approach; IR-Semi: the semiparametric inverse regression approach.
\end{table}

We first investigate the statistical performance using three different measures: the Frobenius norm distance between the projection matrix $P$ and its estimator $\widehat P$, where $P = B(B^\T B)^{-1} B^\T$; the trace correlation $\textnormal{tr}\big(P \widehat P\big)/d$, where $d$ is the structural dimension; and the canonical correlation between $B^\T X$ and ${\widehat B}^\T X$. The results are summarized in Table \ref{tab:sim}.

Overall, the two inverse regression methods achieve the best performance, followed by the computationally efficient approach. It is worth to point out that the computational efficient approach, while no non-parametric approximation is required, outperforms existing methods in almost all settings. Among all competing methods, double slicing performs the best in general, while \cite{xia2010dimension} and \cite{lu2011sufficient} outperforms double slicing in Setting 3 and Setting 4, respectively. In terms of the three error measurements, we found that the Frobenius norm distance is the most informative measurement, while the Trace and Canonical correlations are less sensitive to the performances.

Among the two inverse regression methods, the semiparametric version is slightly better in Settings 3 and 4. The main advantage of the semiparametric version compared with the counting process version is the double robustness, which ensures consistency even when the conditional expectations are not estimated correctly. However, this theoretical advantage does not translate into strong numerical improvements in Settings 1 and 2 especially when $p$ is large. This is possibly due to the variations in the hazard function estimation, which introduces less numerical stability. In setting 1, forward regression approach achieves the best performance. As discussed in Example 1, this method mimics the efficient estimating equations used in the Cox proportional hazard model and is thus the most efficient method in this setting. In setting 2, The computationally efficient approach performs similarly to the two inverse regression approaches and even outperforms them under large $p$. This shows some potential of this approach in higher dimension settings when nonparametric estimations may not be preferred.

One major challenge of solving estimating equations is the computational complexity, especially with nonparametric components. Our proposed method adds additional difficulties with the orthogonality constraints, i.e., $B^\T B = I_{d \times d}$. However, with our proposed orthogonality constrained optimization algorithm, in combination with the Rcpp interface, our implementation can solve the proposed method very efficiently. In addition, parallel computing through OpenMP is utilized to numerically approximate the gradient for each entry of $B$. For example, in Setting 2 with $p=6$, the mean computational time of the inverse regression counting process approach is 1.62 seconds, while the time for the semiparametric version is 8.01 seconds. Table \ref{tab:time} in the Supplementary Material summarizes the computational cost across all settings. All simulations are done on an Intel Xeon E5-2680v4 processor with 5 parallel threads.

We further investigate the variance of the proposed methods. Due to the complicated form of the variance formula, we instead use the bootstrap to obtain an estimation of the standard deviation of the proposed estimators. Using an upper-block-diagonal version of the parameter of interest, we estimate the standard deviation of the parameters based on 100 bootstrap samples and also report the 95\% confidence interval. The results show that in setting one, the bootstrap estimator of all the proposed methods approximates the standard deviation closely. In the rest settings, the approximation of the computational efficient approach and counting process inverse regression approach still archives good performance, while semiparametric inverse regression approach slightly over-estimate the standard deviations, leads to a slight over-coverage (around 98\%). However, the proposed methods still archive smaller empirical standard deviation on nearly all parameters across all settings. Details are provided in the Supplementary Material.

\subsection{Skin Cutaneous Melanoma Data Analysis}\label{realdata}\label{sec:6}
We apply the proposed method to The Cancer Genome Atlas (TCGA, \url{http://cancergenome.nih.gov/}) skin cutaneous melanoma dataset. TCGA provides the public with one of the most comprehensive profiling data on more than thirty cancer types. We acquire gene expression and clinical data on a total of 469 patients (156 observed failures) and their mRNA expression data on 20,531 genes. To produce biologically meaningful results, we preselect 20 genes in this analysis, which are the top 20 genes highly associated with cutaneous melanoma based on meta-analyses of over 145 existing literature \citep{chatzinasiou2011comprehensive}. A list of these genes can be found at \url{http://bioserver-3.bioacademy.gr/Bioserver/MelGene/}. We further include age at diagnosis as a clinical control variable. All covariates are pre-processed to have unit variance and zero mean.

Selecting the number of structural dimensions can be a challenging task, especially with right censored survival model \citep{xia2010dimension}. To this end, we adopt the validated information criterion developed by \citet{ma2015validated}, which is particularly suited for our generalized method of moments framework. The validated information criterion is constructed by penalizing the quadratic form of the objective function. Interestingly when we apply this method to all of our proposed estimating equation approaches, $d=1$ always yields the best fit. Hence we present the results for all method under $d=1$. As a demonstration of the fitted model, we project the design matrix on the estimated direction of the semiparametric inverse regression approach and plot the survival outcome against the projection (Figure \ref{fig:skcm.semi}). A nonparametric estimation of the conditional survival function based on this projection is also produced. From these two figures, we can see a clear trend that subjects with larger values of the projection have lower survival rate. As for comparisons, we perform the competing methods with one structural dimension, and the results can be found in the Supplementary Material (Figure \ref{fig:skcm.alt}). From this simple visualization, it seems that the double slicing method obtains a similar direction with monotone effects on the risk of failure, while the other directions obtained by other methods are non-monotone.

We also observe both similarities and differences among different methods for the identified genes. A pairwise Frobenius norm distance is provided in the Supplementary Material (Table \ref{SKCM:dist}). This suggests that the proposed methods have fairly small distances while existing methods mostly do not agree with each other. In addition, double slicing has the smallest distance with the proposed methods. The proposed methods share some consistent trend of the loadings on the most influential variables. For example, all the proposed methods identify Age as the most important variable with loadings over 0.5. Corresponding this with the survival curve plot in Figure \ref{fig:skcm.semi}, it suggests that higher survival rate is observed for smaller derived direction. This further indicates that patients with younger age tend to have higher survival, which is biologically intuitive. This finding is also consistent with the double slicing method, which identifies Age with loading 0$\cdot$47. However, other methods do not assign large loading to Age. Here, we present all results with a positive loading of age, and the directions are multiplied by $-1$ otherwise. Another important variable that all methods agree in terms of signs is MTAP. This gene has been previously reported to have a negative correlation with the progression of melanocytic tumors \cite{behrmann2003characterization}, which justifies the large negative value estimated by the proposed methods. However, the magnitudes in alternative methods are small. Other common genes identified by the proposed methods are MYH7B and CASP8. We mine the literature and found that \cite{li2008genetic} genotyped putatively functional polymorphisms of CASP8 and found a significant association with lower risk of cutaneous melanoma. The result therein supports the large negative loading of CASP8 gene in our fitted model. For differences across methods, TYR is identified by alternative methods except the double slicing, with extremely large loadings (up to $-$0$\cdot$78). The enzyme encoded by this gene controls the production of melanin, hence it has been shown to be strongly associated with melanoma \citep{gudbjartsson2008asip}. Although the estimated directions are dominated by this gene, we did not observe a monotone effect of the directions (See Figure \ref{fig:skcm.alt} in the Supplementary Material).

\begin{table}[htbp]
    \renewcommand{\arraystretch}{0.75}
    \caption{SKCM data analysis results: the loading vectors ($\times 10^2$) of the first structural dimension.} 
    \center{\scriptsize
    \begin{tabular}{r r r r r r r r r}
 & \begin{turn}{90} \begin{tabular}{l} Naive \end{tabular} \end{turn} & \begin{turn}{90} \begin{tabular}{l} DS \end{tabular} \end{turn} &
 \begin{turn}{90} \begin{tabular}{l} IPCW-SIR \end{tabular} \end{turn} & \begin{turn}{90} \begin{tabular}{l} hMave \end{tabular} \end{turn} &
 \begin{turn}{90} \begin{tabular}{l} \\ Forward \end{tabular} \end{turn} & \begin{turn}{90} \begin{tabular}{l} CP-SIR \end{tabular} \end{turn} &
 \begin{turn}{90} \begin{tabular}{l} IR-CP \end{tabular} \end{turn} & \begin{turn}{90} \begin{tabular}{l} IR-Semi \end{tabular} \end{turn} \\
Age	&	16	&	47	&	10	&	0	&	60	&	59	&	53	&	54	\\
TYRP1	&	-16	&	-5	&	-9	&	24	&	18	&	11	&	39	&	30	\\
OCA2	&	18	&	17	&	14	&	-6	&	21	&	19	&	22	&	5	\\
TYR	&	-60	&	-9	&	-65	&	-78	&	-19	&	-27	&	-19	&	9	\\
SLC45A2	&	11	&	24	&	23	&	14	&	30	&	28	&	16	&	17	\\
CDKN2A	&	6	&	-28	&	-2	&	-12	&	-9	&	-7	&	-2	&	-11	\\
MX2	&	2	&	-2	&	-2	&	-12	&	-19	&	-13	&	-30	&	-27	\\
MTAP	&	-15	&	-8	&	-10	&	-14	&	-31	&	-36	&	-35	&	-30	\\
MITF	&	56	&	-9	&	43	&	5	&	-13	&	-12	&	2	&	-27	\\
VDR	&	5	&	-18	&	-9	&	10	&	-10	&	-6	&	-4	&	2	\\
CCND1	&	-20	&	35	&	-21	&	-5	&	16	&	17	&	18	&	16	\\
MYH7B	&	10	&	-27	&	5	&	-4	&	-29	&	-32	&	-30	&	-48	\\
ATM	&	-16	&	-22	&	2	&	28	&	-4	&	7	&	0	&	6	\\
PLA2G6	&	-22	&	-16	&	-21	&	7	&	4	&	-5	&	-11	&	-3	\\
CASP8	&	15	&	-39	&	21	&	-13	&	-26	&	-24	&	-18	&	-14	\\
AFG3L1	&	12	&	26	&	18	&	-15	&	17	&	10	&	-6	&	-9	\\
CDK10	&	3	&	8	&	2	&	25	&	-7	&	-1	&	9	&	8	\\
PARP1	&	-9	&	3	&	-22	&	17	&	14	&	18	&	8	&	18	\\
CLPTM1L	&	-8	&	-5	&	17	&	2	&	-6	&	-6	&	-2	&	-6	\\
ERCC5	&	-14	&	25	&	-17	&	-7	&	12	&	13	&	22	&	3	\\
FTO	&	-3	&	-3	&	-8	&	14	&	15	&	17	&	7	&	5	\\
\end{tabular}}\label{SKCM}
{\smallskip\vspace{0.1in}}\\
\scriptsize DS: \cite{li1999dimension}; IPCW-SIR: \cite{lu2011sufficient}; hMave: \cite{xia2010dimension}; Forward: forward regression; CP-SIR: the computational efficient approach; IR-CP: the counting process inverse regression approach; IR-Semi: the semiparametric inverse regression approach.
\end{table}

\begin{figure}[t]
\centering
\caption{Fitted direction and survival function of the semiparametric inverse regression}{
    \begin{tabular}{cc}
    \includegraphics[height = 2.5in, width=0.45\textwidth, trim={5.5in 10.5in 4in 5in}, clip]{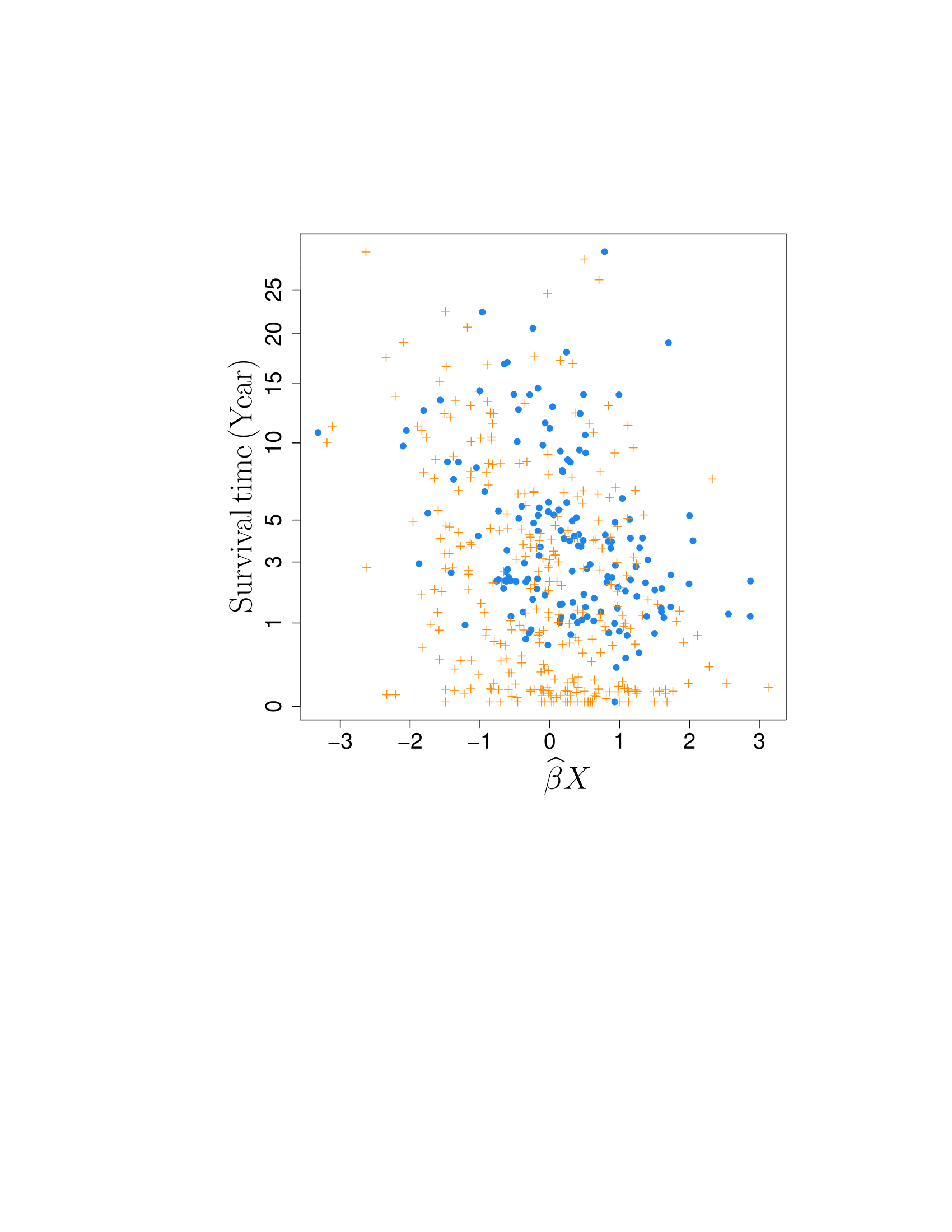} \hspace{0.05\textwidth}
    \includegraphics[height = 2.5in, width=0.45\textwidth, trim={5.5in 11in 3.5in 6in}, clip]{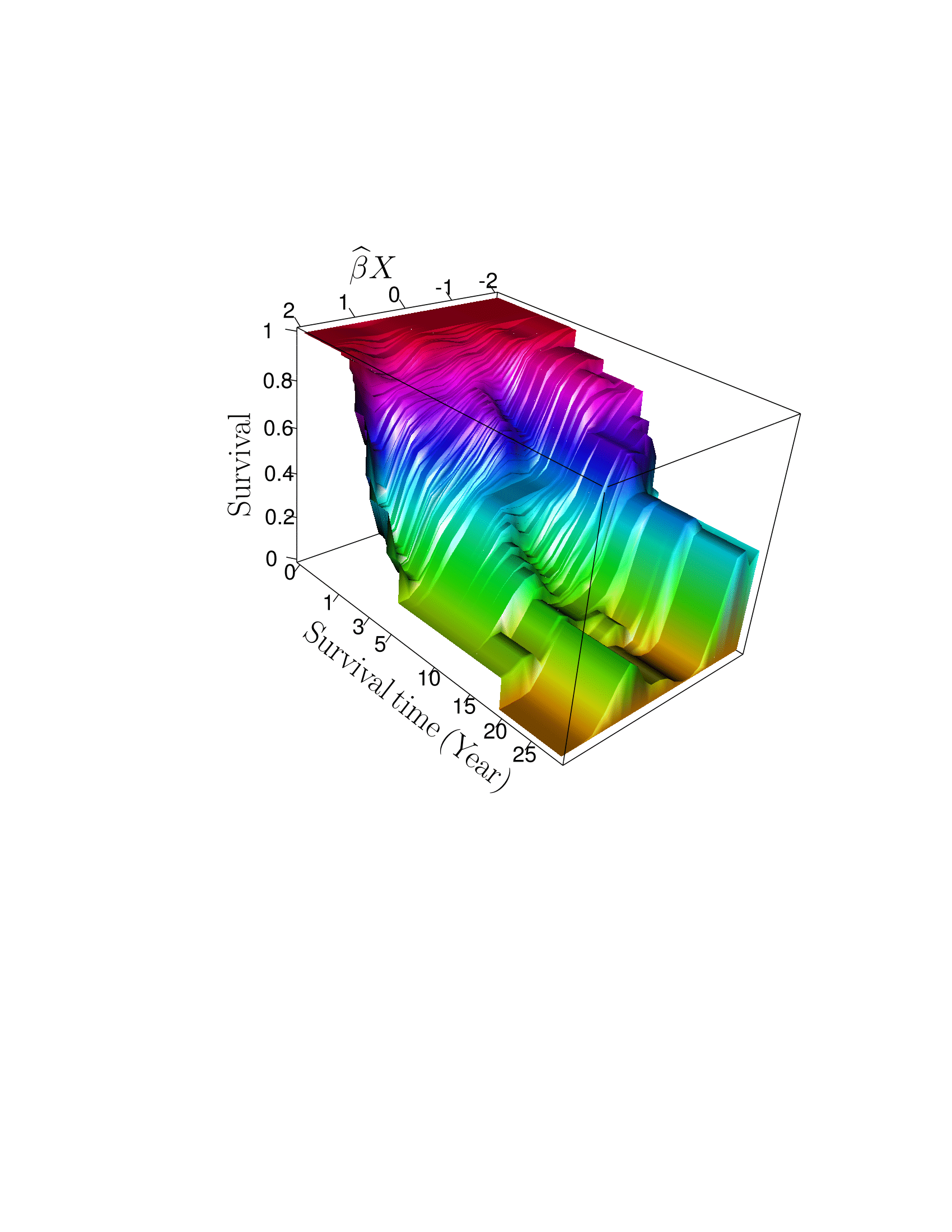}
    \end{tabular}}\label{fig:skcm.semi}
    \scriptsize The left figure is the projected direction versus the observed failure (blue dot) and censoring (orange $+$) times. The right figure is a nonparametric estimation of the survival function based on the projected direction.
\end{figure}

\section{Discussion}\label{sec:dis}

In this paper, we proposed a counting process based dimension reduction framework for censored outcomes. A family of generalized method of moments based approaches has been constructed for estimating the dimension reduction subspace. The main advantage of the proposed method is that it requires only a $d$-dimensional (instead of $p$) nonparametric kernel estimation while no censoring distribution is modeled. The reduced dimension of the nonparametric estimation circumvents the difficulties of many existing methods and improves the efficiency when the total dimension $p$ is too large for kernel methods. Our simulation study suggests that the proposed method outperforms existing methods in a variety of settings. To efficiently solve the proposed estimating equations, we further introduce an orthogonality constrained optimization method that solves the parameters within a Stiefel manifold. With implementations in the R package ``orthoDr'' through C++, the counting process version of the estimators can be solved within a few seconds. However, the martingale version requires significant more calculations due to the local estimation the hazard function, hence requiring a few minutes to solve. We believe that there is still potential room to improve the computational performance. In addition, our computational efficient approach requires only a singular value decomposition and has satisfactory performances. However, it does not enjoy the same theoretical guarantee without restraint conditions on the covariates. Further relaxation of these conditions is of great interest.

Our framework can be possibly extended to more general settings. First, by imposing penalization on the estimating equations, it is possible to extend the proposed method to moderately high dimensional data. Sparse estimation of the $B$ parameter may help both interpretations and improves the prediction accuracy of subsequent nonparametric models. The second direction is to search for an alternative construction of the $\alpha$ functions. Throughout our developments, we used the $\phi(u)$ function which is motivated by the inverse regression of a Bernoulli distribution. It would be interesting to investigate the possibilities of a ``SAVE'' type of $\alpha$ function that many detect more complicated model structure. We can also consider using $\alpha(u, X) = B^\T X \varphi^\T(u)$, which is another valid choice. Lastly, it is also interesting to extend this framework to a time-varying coefficient setting, where we may let the dimension reduction space $\cal S$ to change over time $t$.

\section{Supplementary Material}
Supplementary material available at Biometrika online includes a derivation of the ortho-complement of the nuisance tangent space; the proof of the double robustness property for the semiparametric inverse regression approach; the proof of Lemma \ref{lem:thm.1} and Theorem \ref{thm1}; and additional simulation and data analysis results.

\section{Acknowledgement}

The first two authors contribute equally. This work is partially supported by the NIH and the NSERC.

\newpage

\begin{center}
\Huge{\textbf{Supplementary Material}}
\end{center}

\appendix

\section{Tangent Space Derivation}\label{supp:tangent}

Before we give the tangent space, we first derive the nuisance tangent spaces $\cE_1, \cE_2$ and $\cE_3$ in the following proposition. The proof follows the similar argument for proving the nuisance tangent spaces of cox regression model \citep{tsiatis2007semiparametric} and is thus  omitted for simplicity.
\begin{proposition}\label{prop:nuisance}
The nuisance tagent spaces $\cE_1, \cE_2$ and $\cE_3$ have the following forms
\$
\cE_1&=\bigg\{ \int \alpha(u, B^\T X) dM(u,B^\T X):\, \alpha(u,B^\T X) \,\, \text{is measurable} \bigg\},\\
\cE_2&=\bigg\{ \int \alpha(u, X) dM_C(u,X):\, \alpha(u,X) \,\, \text{is measurable} \bigg\}, ~\text{and}\\
\cE_3&=\bigg\{  \alpha(X): E \big\{\alpha(X)\big\}=0\bigg\},\$
\end{proposition}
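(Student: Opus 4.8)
The plan is to obtain each nuisance tangent space as the mean-square closure of the linear span of the score functions generated by smooth one-dimensional parametric submodels, following the Hilbert-space program of \cite{bickel1993efficient} and \cite{tsiatis2007semiparametric}. The key structural fact I would exploit is that, under the conditional independence of $T$ and $C$ given $X$ together with the restriction $\lambda(u\mid X)=\lambda(u\mid B^\T X)$, the observed-data likelihood factorizes into three blocks governed by three variation-independent nuisance components: the conditional failure hazard as a function of $(u,B^\T X)$, the conditional censoring hazard as a function of $(u,X)$, and the marginal law of $X$. Because these blocks are variation independent, their scores are mutually orthogonal in $L_2$ and the three tangent spaces may be computed one at a time.

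Concretely, writing $N_C(u)=I(Y\le u,\delta=0)$ for the censoring counting process and $M_C$ for its martingale, the log-likelihood of a single observation is
\[
\ell=\int \log\lambda(u\mid B^\T X)\,dN(u)-\int\lambda(u\mid B^\T X)Y(u)\,du+\int\log\lambda_C(u\mid X)\,dN_C(u)-\int\lambda_C(u\mid X)Y(u)\,du+\log f_X(X).
\]
To obtain $\cE_1$ I would embed the truth in a path $\lambda_\eta(u\mid B^\T X)$ with $\lambda_0=\lambda$, differentiate $\ell$ at $\eta=0$, and set $\alpha(u,B^\T X)=\partial_\eta\log\lambda_\eta|_{\eta=0}$. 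Only the first two terms of $\ell$ depend on $\eta$, and their derivative collapses to $\int\alpha(u,B^\T X)\{dN(u)-\lambda(u\mid B^\T X)Y(u)\,du\}=\int\alpha(u,B^\T X)\,dM(u,B^\T X)$. As the path ranges over all admissible hazard perturbations, $\alpha$ sweeps out every square-integrable measurable function of $(u,B^\T X)$---and only of $B^\T X$, since the failure hazard is constrained to depend on $X$ through $B^\T X$---so the closed span is exactly the stated $\cE_1$.

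The identical differentiation applied to a path $\lambda_{C,\eta}(u\mid X)$ yields the score $\int\alpha(u,X)\,dM_C(u,X)$, the sole difference being that the censoring hazard is unrestricted in $X$, so the direction $\alpha(u,X)$ now ranges over all measurable functions of the full covariate; this gives $\cE_2$. For $\cE_3$ I would perturb the marginal density along $f_{X,\eta}$ subject to the normalization $\int f_{X,\eta}=1$; differentiating $\log f_{X,\eta}(X)$ produces a score $\alpha(X)$, and differentiating the constraint forces $E\{\alpha(X)\}=0$, while every mean-zero square-integrable function of $X$ is attainable, delivering $\cE_3$. Throughout, the orthogonality noted above guarantees that perturbing one block contributes nothing to the scores of the other two.

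The main obstacle will be the density-and-closure step: verifying rigorously that the scores produced by genuinely admissible submodels are dense, in the appropriate $L_2$ inner product, in the claimed spaces, and that the closure introduces no extraneous elements. This amounts to checking that bounded hazard-ratio perturbations generate all of $L_2(dM)$ and $L_2(dM_C)$ and that bounded density perturbations generate all mean-zero functions in $L_2(f_X)$. These are precisely the verifications carried out for the Cox model in \cite{tsiatis2007semiparametric}; the only structural change here is the substitution $X\mapsto B^\T X$ in the failure hazard, which simply restricts the admissible directions in $\cE_1$ to functions measurable with respect to $B^\T X$ and leaves the censoring and density arguments intact, so the Cox-model proof transfers with only notational modification.
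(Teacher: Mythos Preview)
Your proposal is correct and matches the paper's approach exactly: the paper does not spell out a proof of this proposition, stating only that it ``follows the similar argument for proving the nuisance tangent spaces of Cox regression model \citep{tsiatis2007semiparametric} and is thus omitted for simplicity.'' Your outline---factorize the observed-data log-likelihood into the three variation-independent blocks, differentiate along one-dimensional submodels in each block to obtain the martingale-integral and mean-zero forms, and note that the only change from the Cox case is the replacement $X\mapsto B^\T X$ in the failure-hazard block---is precisely that Cox-model argument with the indicated notational modification.
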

Next, we give a proof of tangent space $\cE^\perp$, which is defined as
\#
\cE^\perp=\bigg\{ \int \big\{\alpha(u, X)-\alpha^*(u, B^\T X)\big\} dM(u,X): \, \alpha(u,X) \,\, \text{is measurable} \bigg\}. \nonumber
\#
where
\#
\alpha^*(u,B^\T X) = E\big\{\alpha(u,X)\big| \cF_{u}, B^\T X\big\}.\nonumber
\#
\begin{proof}[Proof of Proposition \ref{prop:nuisance}]
For a fully nonparametric model, the nuisance tangent space is the whole Hilbert space $\cH$ with each element having  mean zero. Therefore, if we put no restriction on the hazard function $\lambda({t|X})$ and write the associated nuisance tangent space as $\cE^*_1$, we obtain
\begin{gather*}
\cH=\cE^*_1\oplus \cE_2\oplus\cE_3, ~\text{where}~\\
\cE^*_1= \bigg\{ \int \alpha(u, X)dM(u,X):\, \alpha(u,X) \,\, \text{is measurable} \bigg\}.
\end{gather*}
The orthogonal completion  of $\cE$ satisfies that $\cE^\perp\subset \cE_1^*$  and $\cE^\perp \perp \cE_1$. In order to identify $\cE^\perp$, it suffices to take an arbitrary element in $\cE_1^*$ and find its residual after projecting it onto $\cE_1$. To find the projection, we must derive $\alpha^*(u, B^\T X)\in \cE_1$ such that
\$
E\bigg(  \int\big(\alpha(u,X)-\alpha^*(u,B^\T X)\big)^\T dM(u,X)\int a(u,B^\T X) dM(u,X)  \bigg)=0.
\$
The covariance of martingale stochastic integrals  above can be computed by finding the expectation of the predictable covariance process \citep{fleming2011counting}:
\$
&E\bigg(  \int\big\{\alpha(u,X)-\alpha^*(u,B^\T X)\big\}^\T dM(u,X)\int a(u,B^\T X) dM(u,X)  \bigg)\\
&=E\bigg( \int   \Big[E\big\{   \alpha(u,X)\big|\cF_{u}, B^\T X\big\}-\alpha^*(u,B^\T X) \Big]^\T a(u,B^\T X) \lambda(u|B^\T X)Y(u) du\bigg)=0,
\$
where $a(u, B^\T X)$ is arbitrary and $\cF_{u}$ is the filtration. Thus we must have
\$
\alpha^*(u, B^\T X)= E\big\{\alpha(u,X)\big|\cF_{u}, B^\T X\big\}.
\$
This completes the proof. 
\end{proof}

\section{Proof of the Double Robustness Property for the semiparametric inverse regression approach}\label{supp:double}
Recall that, for the semiparametric inverse regression approach, we solve the sample version of the following estimating functions
\$
E\bigg[\int \Big\{E\big(X|Y(u)\big)-E\big(X| Y(u), B^\T X\big)\Big\}\varphi^\T(u)dM(u)\bigg]=0.
\$
For simplicity, we will use the random function to denote
\$
F(X, u)= \Big\{E\big(X|Y(u)\big)-E\big(X|Y(u), B^\T X\big)\Big\} \varphi^\T(u).
\$

\noindent {\sf Case 1}: Suppose $M(u)$ is misspecified as $M^*(u)$. Then we have that
\$
E\big\{F(X, u)d M^*(u)\big\}&\!=\!E\big[E\big\{F(X, u) dM^*(u)| Y(u), X\big\}\big]\\
&\!=\!E\big[F(X, u)E\big\{dM^*(u)| Y(u), X\big\}\big]\\
&\!=\!E\big[E\big\{F(X, u)| Y(u), B^\T X\big\}E\big\{dM^*(u)| Y(u), B^\T X\big\}\big]\\
&\!=\!0,
\$
where the last equation is due to the fact that $E\big\{F(X, u)| Y(u), B^\T X\big\} = 0$. Hence we have
\$
E\bigg[\int F(X,u)dM^*(u)\bigg]=0.
\$

\noindent {\sf Case 2}: Suppose the function $F(X,u)$ is misspecified to $F^*(X,u)$. With a similar argument, we can show that $E\big\{\int F^*(X,u )dM(u)\big\}=0$ due to the fact that $E\big\{dM(u)| Y(u), B^\T X\big\} = 0$. This completes the proof.

\section{Proofs of asymptotic results}

\subsection{Proof of Lemma \ref{lem:thm.1}}
\begin{proof}[Proof of Lemma \ref{lem:thm.1}]
Without loss of generality, we prove the results for $d=1$.
We start with  the convergence rate of $\widehat\lambda(t|B^\T X=B^\T x)$. Let $f_{B^\T X}(z)$ be the true density  function of $B^\T X$ evaluated at $z=B^\T x$, and
\$
A_i=\frac{1}{n}\sum_{j=1}^n I(Y_j\geq u)K_h(B^\T X_j-z)-f_{B^\T X}(z)E\big\{I(Y\geq u)| z \big\}.
\$
Suppose we have proved that $A_i=O_p\left((nh)^{-1/2}+h^2\right)$ uniformly.
Let $z_i=B^\T x_i$.
We have
\$
\widehat\Lambda(t| z)&=\sum_{i=1}^n\frac{ K_b(Y_i-t)\delta_iK_h(B^\T X_i-z) }{\sum_{j=1}^n I(Y_j\geq t)K_h(B^\T X_i-z)}\\
&=\underbrace{\frac{1}{n}\sum_{i=1}^n\frac{ K_b(Y_i-t)\delta_iK_h(B^\T X_i-z) }{f_{B^\T X}(z)E\big\{I(Y\geq t)| z \big\}}}_{\Rom{1}}\big(1+O(A_i)\big).
\$
We bound the term (\Rom{1}) first. Let $S(t, x)=\pr(T\geq t| X=x)$ and $S_c(t, x)=\pr(C\geq t | X=x)$.  Then the expectation of (\Rom{1}) can be written as
\$
&
E\left[ \frac{K_b(Y_i-t)\delta_iK_h(B^\T X_i-z)}{f_{B^\T X}(z)S(t, z)E\big\{S_c(t, X_i)|z\big\}}  \right]\\
&=\iint \frac{K_b(y_i- t )K_h(z_i -z) E\{S_c(y_i, X_i|B^\T x_i)\}}{f_{B^\T X}(B^\T x) S(t, B^\T x)E \{S_c(t, X_i)|B^\T x\}}f({y_i, B^\T x_i}) f_{B^\T X}(B^\T x_i) \dif y_i \dif z_i\\
&=\iint \frac{K(v)K(u) E\{S_c(t+bv, X_i|z+hu)\}}{f_{B^\T X}(z) S(t, z)E \{S_c(t, X_i)|z\}}f({t+bv, z+hu}) f_{B^\T X}(z\!+\!hu) \dif v \dif u   \\
&=\iint K(v) K(u) \lambda(t, z) \dif v \dif u
\\
&\qquad + \frac{h^2\partial^2 }{2\partial  z^2}\iint \frac{K(v )K(u) E\{S_c(t, X_i|z^*)\}}{f_{B^\T X}(z) S(t, z)E \{S_c(t, X_i)|z\}}f({t, z^*}) f_{B^\T X}(z^*) u^2\dif v \dif u \\
&\qquad +\frac{b^2\partial^2 }{2\partial  t^2}\iint \frac{K(v )K(u) E\{S_c(t^*, X_i|z)\}}{f_{B^\T X}(z) S(t, z)E \{S_c(t, X_i)|z\}}f({t^*, z}) f_{B^\T X}(z)v^2 \dif v \dif u \\
&=\lambda(t, z)+\frac{h^2}{2} \sup\left\{   \frac{\frac{\partial^2}{\partial z^2} \left(f({t, z^*}) f_{B^\T X}(z^*) E\{S_c(y_i, X_i|z^*)\}\right)}{f_{B^\T X}(z) S(t, z)E \{S_c(t, X_i)|z\}}\right\} \int K(u)u^2  \dif u\\
&\qquad+\frac{b^2}{2} \sup\left\{   \frac{\frac{\partial^2}{\partial b^2} \left(f({t^*, z})  E\{S_c(t^*, X_i|z)\}\right)}{ S(t, z)E \{S_c(t, X_i)|z\}}\right\} \int K(v)v^2  \dif u\\
&=\lambda(t, z)+O(h^2+b^2),
\$
where we set $z_i=z+hu$ and $y_i=t+bv$ in the second equality,  $z^*$ and $t^*$ are some convex combinations of $z$ and $z+hu$, $t$ and $t+bv$ respectively. The last equality is a consequence of the assumed assumptions.

Next, we bound the variance of $\widehat\lambda(t,z)$:
\$
\var\left\{\widehat \lambda(t, z)\right\}
&=\var\left\{ \frac{1}{n}\sum_{i=1}^n\frac{ K_b(Y_i- t)\delta_iK_h(B^\T X_i-z) }{f_{B^\T X}(z)E\big\{I(Y\geq t)| z \big\}}\big(1+O(A_i)\big)\right\}\\
&\leq 2\,\var\left\{\frac{1}{n}\sum_{i=1}^n\frac{ K_b(Y_i-t)\delta_iK_h(B^\T X_i-z) }{f_{B^\T X}(z)E\big\{I(Y\geq t)| z \big\}}\right\}\\
&\qquad\qquad +2\,\var\left\{\frac{1}{n}\sum_{i=1}^n\frac{ K_b(Y_i-t)\delta_iK_h(B^\T X_i-z) }{f_{B^\T X}(z)E\big\{I(Y\geq t)| z \big\}}O(A_i)\right\}.
\$
We upper  bound the first quantity first.
\$
&~~2\,\var\left\{\frac{1}{n}\sum_{i=1}^n\frac{ K_b(Y_i-t)\delta_iK_h(B^\T X_i-z) }{f_{B^\T X}(z)E\big\{I(Y\geq t)| z \big\}}\right\}\\
&=\frac{2}{n}\left(E\left\{\frac{K_b(Y_i-t)\delta_i K_h(B^\T X_i-z)}{f_{B^\T X}(z)E\big\{I(Y\geq t |z)\big\}}\right\}^2-\lambda^2(t,z)\right)+O(h^2/n+b^2/n)\\
&=\frac{2}{n}E\left\{\frac{K_b(Y_i-t)\delta_i K_h(B^\T X_i-z)}{f_{B^\T X}(z)E\big\{I(Y\geq ti |z)\big\}}\right\}^2+O\left({1}/{n}\right)\\
&=\frac{1}{nhb}\iint \frac{K^2(v)K^2(u) E\{S_c(t+bv, X_i|z\!+\!hu)\}}{f^2_{B^\T X}(z) S^2(t, z)E^2 \{S_c(t, X_i)|z\}}f({t+bv, z\!+\!hu}) f_{B^\T X}(z\!+\!hu) \dif v\dif u \!+\!O(1/n)\\
&= \frac{1}{nhb}\iint \frac{K^2(v)K^2(u) }{f_{B^\T X}(z) S^2(t, z)E \{S_c(t, X_i)|z\}}f({t, z})  \dif v \dif u \\
&\qquad + \frac{h\partial^2}{nb\partial z^2}\iint \frac{K^2(v)K^2(u) E\{S_c(t, X_i|z^*)\}}{f^2_{B^\T X}(z) S^2(t, z)E^2 \{S_c(t, X_i)|z\}}f({t, z^*}) f_{B^\T X}(z^*) u^2\dif v \dif u  \\
&\qquad + \frac{b\partial^2}{nh\partial t^2} \iint \frac{K^2(v)K^2(u) E\{S_c(t^*, X_i|z)\}}{f^2_{B^\T X}(z) S^2(t, z)E^2 \{S_c(t, X_i)|z\}}f({t^*,z}) f_{B^\T X}(z) u^2\dif v \dif u + O(1/n)\\
&=O\left(1/(nbh)\right),
\$
uniformly over all $t$ and $z$.
Following the similar argument, we can show that
\$
 &2\,\var\left\{\frac{1}{n}\sum_{i=1}^n\frac{ K_b(Y_i-t)\delta_iK_h(B^\T X_i-z) }{f_{B^\T X}(z)E\big\{I(Y\geq t)| z \big\}}O(A_i)\right\}\\
& =O\left(\frac{1}{nhb}O^2(A_i)\right)=O\left((nhb)^{-2}b+(nhb)^{-1}h^4\right).
\$
In the last equality, we use the fact that $A_i^2=O_p\left(1/(nh)+h^4\right)$ uniformly, which can be proved by following a simplified argument in \cite{mack1982weak}.  Combining the above results together, we obtain
\$
|\widehat\lambda(t,z)-\lambda(t,z)|=O_p\left((nhb)^{-1/2}+h^2+b^2\right),
\$
uniformly for all $t$ and $z$. Using a similar argument, after a some long algebra,   we can prove the second and the third statements. We do not repeat the details here.

\end{proof}

\subsection{Proof of Theorem \ref{thm1}} 


%
%
\setcounter{theorem}{0}
Recall that  $\beta_\ell=\vecl(B)$ and let $\beta=\vecc(B)$. Let $\tilde \beta_\ell=\vecl\big(\widetilde B\big)$ and  $\Omega=\big\{\widetilde B: \|\widetilde\beta_\ell-\beta_\ell\|_2\leq Cn^{-1/2}\big\}$, or equivalently  $\Omega=\big\{\widetilde\beta: \|\widetilde\beta_\ell-\beta_\ell\|_2\leq Cn^{-1/2}\big\}$  for some $C>0$.
We are ready to prove the main theorem of this paper.
\begin{proof}[Proof of Theorem \ref{thm1}]
Without loss of generality, we prove the theorem under the generic case in which   $\alpha^*(u, B^\T X)$ is arbitrary. For simplicity, let
\$
S_n(\widehat B)
& = \frac{1}{n}\vecc{\bigg[\sum_{i=1}^n \int_0^\tau \Big\{ \alpha(u,X_i) - \widehat\alpha^*(u, \widehat B^\T X_i)\Big\}d\widehat M(u, \widehat B^\T X_i)\bigg]}.
\$  
We sometimes write $M(u, B^\T X_i)$ as $M_i(u)$ and  $S_n(B)$ as $S_n(\beta_\ell)$. 
Following \cite{jureckova1971nonparametric} and \cite{tsiatis1990estimating}, it suffices to show that $S_n(\beta_\ell)$ is uniformly asymptotically linear in a small neighborhood of the true parameter $\beta_\ell$. That is, we need to show, there exists some linear operator, $G_n$, such that
\begin{gather*}
\sup_{\|\widehat \beta_\ell-\beta_\ell\|_2\leq Cn^{-1/2}} n^{1/2}\big\|S_n(\widehat\beta_\ell)-S_n^*(\beta_\ell)-G_n(\widehat\beta_\ell-\beta_\ell)\big\|_2= {\rm o}_p\big(1\big),~\textnormal{where}\\
S_n^*(\beta_\ell)=S_n^*(B) = \frac{1}{n}\vecc{\bigg[\sum_{i=1}^n \int_0^\tau \Big\{ \alpha(u,X_i) -\alpha^*(u,  B^\T X_i)\Big\}dM(u,  B^\T X_i)\bigg]}.
\end{gather*}

 We then expand  $S_n(\widehat \beta_\ell)$ as follows,
\#\label{Q:A2}
\!\!0&= S_n(\widehat \beta_\ell) = \frac{1}{n}\vecc{\bigg[\sum_{i=1}^n \int_0^\tau \Big\{ \alpha(u,X_i) - \widehat\alpha^*(u, \widehat B^\T X_i)\Big\}d\widehat M(u, \widehat B^\T X_i)\bigg]}\notag \\
&=\underbrace{ \frac{1}{n}\vecc{\bigg[\sum_{i=1}^n \int_0^\tau \Big\{ \alpha(u,X_i) - \widehat \alpha^*(u, B^\T X_i)\Big\}d \widehat M(u,  B^\T X_i)\bigg]}}_{\Rom{1}}  \notag\\
 &~+\underbrace{\frac{1}{n}\sum_{i=1}^n\frac{\partial }{\partial z}\vecc{\left[ \big\{\alpha(u, X_i)-\widehat \alpha^*(u,  B^\T X_i) \big\} d \widehat M(u,  B^\T X_i)   \right]} (\widehat B-B)^\T X_i}_{\Rom{2}} +o_p\left(n^{-\frac{1}{2}}\right),\notag
\#
where $z$ is a vectorized argument of $B^\T X$. It remains to analyze terms (\Rom{1}) and (\Rom{2}) respectively. For term (\Rom{1}), we obtain that
\$
\Rom{1}& = {\frac{1}{n}\vecc \bigg[\sum_{i=1}^n \int_0^\tau\Big\{ \alpha(u, X_i)-\alpha^*(u,  B^\T X_i)\Big\}dM(u, B^\T X_i)\Big\} \bigg]}     \notag\\
&~+\underbrace{\frac{1}{n}\vecc{\bigg[\sum_{i=1}^n \int_0^\tau \Big\{ \alpha(u,X_i)-\alpha^*(u, B^\T X_i)\Big\}d\Big(\widehat M(u, B^\T X_i)-M(u, B^\T X_i)\Big)\bigg]}}_{\Rom{1}_1}  \notag\\
&~+\underbrace{\frac{1}{n}\vecc{\bigg[\sum_{i=1}^n \int_0^\tau \Big\{ \alpha^*(u,B^\T X_i)-\widehat\alpha^*(u,  B^\T X_i)\Big\}d\Big\{\widehat M(u,  B^\T X_i)-M(u, B^\T X_i)\Big\}\bigg]}}_{\Rom{1}_2}\\
&~+\underbrace{\frac{1}{n}\vecc{\bigg[\sum_{i=1}^n \int_0^\tau \Big\{ \alpha^*(u,B^\T X_i)-\widehat\alpha^*(u,  B^\T X_i)\Big\}dM(u, B^\T X_i)\bigg]}}_{\Rom{1}_3}.
\$
We bound terms $\Rom{1}_1$, $\Rom{1}_2$ and $\Rom{1}_3$, respectively. For  $\Rom{1}_1$, we can applied Lemma \ref{lem:thm.1}. Note that for certain choice of the bandwidth $b$ and $h$, we can always find some $\kappa < 1/2$ such that $\widehat\lambda(t| z)=\lambda(t, z)+O_p(n^{-1/2 + \kappa})$. Hence, for simplification, we use $\kappa$ throughout the rest of this proof.
\$
\Rom{1}_1&=\frac{1}{n}\vecc{\bigg[\sum_{i=1}^n \int_0^\tau \Big\{ \alpha(u,X_i)-\alpha^*(u, B^\T X_i)\Big\} \left\{\lambda (u, B^\T X_i)-\widehat \lambda(u, B^\T X_i)\right\}Y_i(u)du}\\
&=O_p\left(n^{-1+\kappa} \right)O_p\bigg(n^{-\frac{1}{2}} \vecc{\bigg[\sum_{i=1}^n \int_0^\tau \Big\{ \alpha(u,X_i)-\alpha^*(u, B^\T X_i)\Big\} \lambda (u, B^\T X_i)Y_i(u)du} \bigg] \bigg)\\
&=O_p(n^{-1+\kappa})O_p\left(1\right)=O_p(n^{-1+\kappa}),
\$
where the last inequality is due to central limit theorem. Following the some algebra and using Assumption \ref{ass:4}, we can show that $\Rom{1}_2=O_p(n^{-1+\kappa})$ and $\Rom{1}_3=O_p(n^{-1+\kappa})$. Combining the bounds for the above three terms above, we obtain that
\$
\Rom{1}& = {\frac{1}{n}\vecc \bigg[\sum_{i=1}^n \int_0^\tau\big\{ \alpha(u, X_i)-\alpha^*(u,  B^\T X_i)\big\}dM(u, B^\T X_i) \bigg]}+O_p\left(n^{-1+\kappa}\right)\\
&= {\frac{1}{n}\vecc \bigg[\sum_{i=1}^n \int_0^\tau\big\{ \alpha(u, X_i)-\alpha^*(u,  B^\T X_i)\big\}dM(u, B^\T X_i) \bigg]}+o_p\left(n^{-1/2}\right),
\$
where we use the fact that $\kappa<1/2$ with proper choice of $b$ and $h$. For term $\Rom{2}$, in a similar argument, we shall  obtain  that
\$
\Rom{2}& = \frac{1}{n}\sum_{i=1}^n\frac{\partial }{\partial z}{\left(\int_0^\tau \vecc{\left[ \big\{\alpha(u, X_i)- \alpha^*(u,  B^\T X_i) \big\}\right]} d  M(u,  B^\T X_i)   \right)} (\widehat B-B)^\T X_i+o_p\left(n^{-1/2}\right)\\
&=\frac{1}{n}\sum_{i=1}^n\frac{\partial }{\partial \beta_\ell}{\left(\int_0^\tau \vecc{\left[ \big\{\alpha(u, X_i)- \alpha^*(u,  B^\T X_i) \big\}\right]} d  M(u,  B^\T X_i)   \right)} (\widehat \beta_\ell-\beta_\ell)+o_p\left(n^{-1/2}\right),
\$
where we rewrite the first term in the right hand side as the derivative with respect to $\beta_\ell$. For simplicity, define
\$
A_i(\tau)&=\int_0^\tau\vecc\left\{\alpha(u, X_i)-\alpha^*(u,  B^\T X_i)\right\}dM(u, B^\T X_i),\\
\textnormal{and} \quad G_n&=\frac{1}{n}\sum_{i=1}^n\frac{\partial }{\partial \beta_\ell}{\left(\int_0^\tau \vecc{\left[ \big\{\alpha(u, X_i)- \alpha^*(u,  B^\T X_i) \big\}\right]} d  M(u,  B^\T X_i)   \right)}.
\$
Combing the asymptotically linear expansion with terms $\Rom{1}$ and $\Rom{2}$, $S_n(\widehat B)$ can be written as
\$
\widehat S_n(\widehat B)
&=\frac{1}{n} \sum_{i=1}^nA_i(\tau)+G_n\Big (\beta_\ell-\widehat\beta_\ell\Big)   +{\rm o}_p\big(n^{-1/2}\big)=0.
\$
 Therefore, $\widehat\beta_\ell-\beta_\ell$ can be written as
\$
\sqrt{n}\big(\widehat\beta_\ell-\beta_\ell\big)\!=\!(G_n^\T G_n)^{-1}G_n^\T\bigg\{\frac{1}{\sqrt n} \sum_{i=1}^nA_i(\tau)\bigg\}+{\rm o}_p(1),
\$
where we implicitly assume that  $G_n^\T G_n$ is invertible, which requires that at the number of estimating equations is larger than the number of parameters. In other words, this assume that the estimating equations are rich enough to recover the $d(p\!-\!d)$-dimensional vector of parameters $\beta_\ell$. This finishes the proof.
\end{proof}

\section{Computational Time}

This computational time for all proposed method is presented in Table \ref{tab:time}. All simulations are done on an Intel Xeon E5-2680v4 processor with 5 parallel threads.

\begin{table}[htp]
\def~{\hphantom{0}}
\centering
\footnotesize
\begin{threeparttable}
\caption{Mean computational time (in seconds)}
\begin{tabular}{l rrrrrr}
& \multicolumn{3}{c}{Setting 1} & \multicolumn{3}{c}{Setting 2} \\
Dimension & \multicolumn{1}{c}{$p=6$} & \multicolumn{1}{c}{$p=12$} & \multicolumn{1}{c}{$p=18$} & \multicolumn{1}{c}{$p=6$} & \multicolumn{1}{c}{$p=12$} & \multicolumn{1}{c}{$p=18$} \\
Forward	&	$< 1$	&	1	&	1	&	$< 1$	&	1	&	1	\\
CP-SIR	&	$< 1$	&	$< 1$	&	$< 1$	&	$< 1$	&	$< 1$	&	$< 1$	\\
IR-CP	&	$< 1$	&	1	&	1	&	2	&	6	&	11	\\
IR-Semi	&	1	&	4	&	13	&	8	&	56	&	162	\\
{\smallskip\vspace{-0.1in}}\\
& \multicolumn{3}{c}{Setting 3} & \multicolumn{3}{c}{Setting 4} \\
Dimension & \multicolumn{1}{c}{$p=6$} & \multicolumn{1}{c}{$p=12$} & \multicolumn{1}{c}{$p=18$} & \multicolumn{1}{c}{$p=6$} & \multicolumn{1}{c}{$p=12$} & \multicolumn{1}{c}{$p=18$} \\
Forward	&	$< 1$	&	$< 1$	&	1	&	$< 1$	&	$< 1$	&	1	\\
CP-SIR	&	$< 1$	&	$< 1$	&	$< 1$	&	$< 1$	&	$< 1$	&	$< 1$	\\
IR-CP	&	2	&	5	&	8	&	1	&	3	&	7	\\
IR-Semi	&	12	&	47	&	146	&	6	&	27	&	96	\\
\end{tabular}\label{tab:time}
\scriptsize
\begin{tablenotes}
\item Forward: forward regression; CP-SIR: the computational efficient approach; IR-CP: the counting process inverse regression approach; IR-Semi: the semiparametric inverse regression approach. All simulations are done on an Intel Xeon E5-2680v4 processor with 5 parallel threads.
\end{tablenotes}
\end{threeparttable}
\end{table}

\section{Additional Simulation Results} \label{sec:add.sim}

This section includes parameter estimations and standard deviation results in the simulation. For each estimated parameter matrix $\widehat B$, we linearly transform the columns of this matrix such that the block sub-matrix is an identity matrix. For example, in settings 2 and 3, we require the first two rows to be a block diagonal matrix, i.e., the transformed parameter matrix is obtained through $\widehat B \widehat B_{[1:2,\, 1:2]}^{-1}$, where $B_{[1:2,\, 1:2]}$ is the upper-block (first two rows) of $\widehat B$. In setting 1, a one dimensional case, only the first row is used, hence is essentially $\widehat B \widehat B_{[1,\, 1]}^{-1}$. In setting 4, the first and the third rows are used. Parameter estimations of other rows are reported in the following tables. To estimate the standard deviation, we use 100 bootstrapped samples of the training data to estimate the parameters, then translate each parameter estimate into the form with a diagonal matrix at the specified rows. Then for each parameter in the rest of matrix, since the parameters asymptotically follow a joint Gaussian distribution, we use 1.4826 times the median absolute deviation of the 100 bootstrap replicates as an estimation of the standard error, following the fact that this is a consistent estimator of the standard deviation in the normal distribution. The reason that we use the median absolute deviation instead of a regular standard deviation estimator is to ensure robustness such that extreme values (possibly due to convergence issue caused by duplicated samples) in the bootstrap estimations do not dominate the result.

\begin{figure}[htp]
\centering
\footnotesize
\caption{Boxplot of parameter estimates (Setting 1, $p = 6$)}
    \begin{tikzpicture}
    \node[inner sep=0pt] (dm) at (0,0)
        {\includegraphics[width=4in, height = 2.5in]{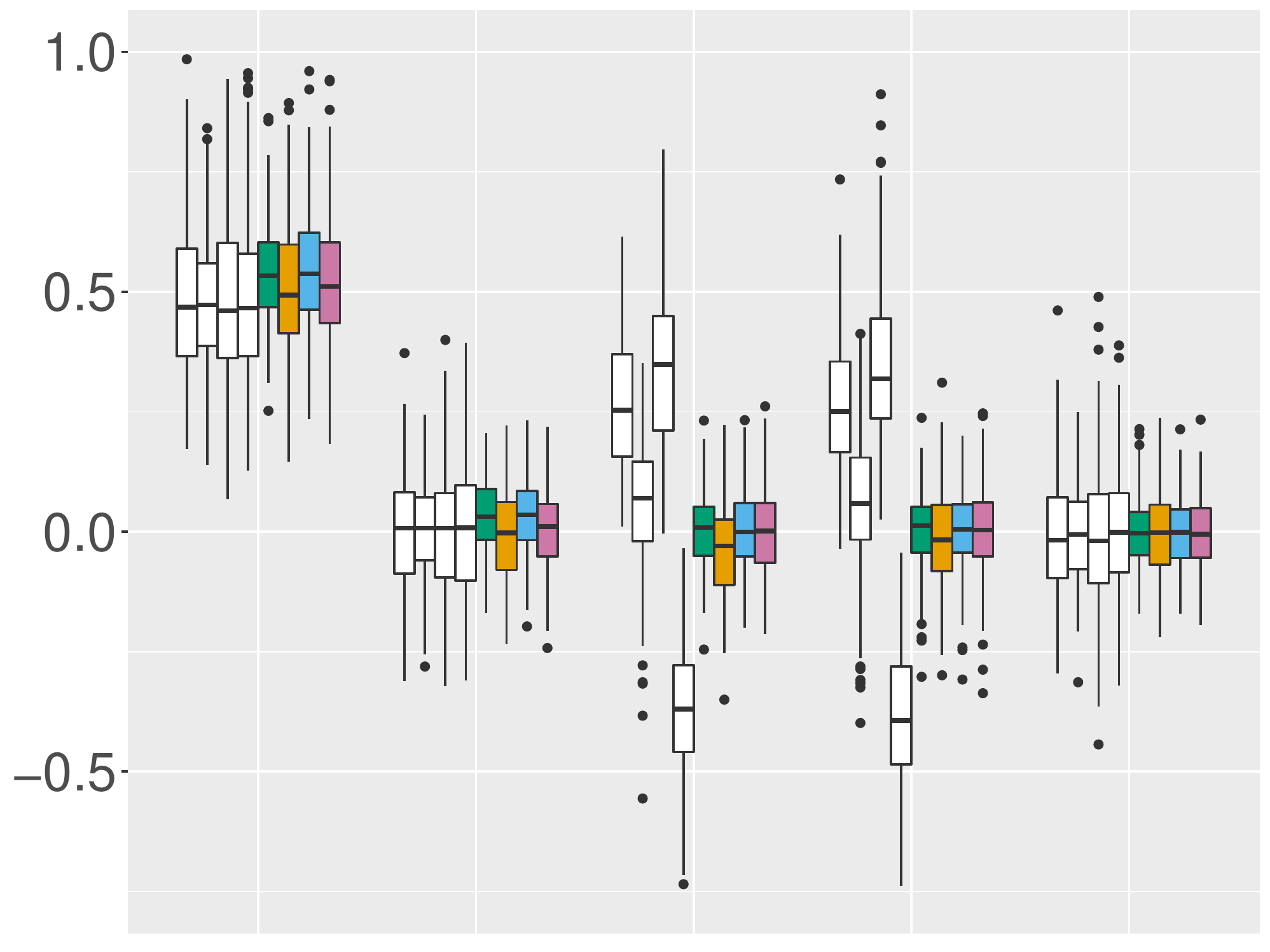}};
    \node[text width=6cm] at (-0.6, -3.5)
        {$\beta_2 = 0.5$};
    \node[text width=6cm] at (1.15, -3.5)
        {$\beta_3 = 0$};
    \node[text width=6cm] at (2.9, -3.5)
        {$\beta_4 = 0$};
    \node[text width=6cm] at (4.65, -3.5)
        {$\beta_5 = 0$};
    \node[text width=6cm] at (6.4, -3.5)
        {$\beta_6 = 0$};
    \end{tikzpicture}\label{fig:sds1}
    \scriptsize For each paramter, the boxes are (from left to right, with proposed methods colored): DS \citep{li1999dimension}; IPCW-SIR \citep{lu2011sufficient}; hMave \citep{xia2010dimension}; The forward regression; The computational efficient approach; The counting process inverse regression; The semiparametric inverse regression.
\end{figure}

\begin{table}[htp]
\def~{\hphantom{0}}
\renewcommand{\arraystretch}{0.75}
\centering
\begin{threeparttable}
\scriptsize
\caption{Mean ($\times 10^4$) and standard deviation ($\times 10^4$) of parameter estimations (Setting 1, $p=6$).}
\begin{tabular}{r p{0.8cm} rrrrr}
&  & $\beta_2 = 0.5$ & $\beta_3 = 0$ & $\beta_4 = 0$ & $\beta_5 = 0$ & $\beta_6 = 0$ \\
{\smallskip\vspace{-0.1in}}\\
Naive	&	mean	&	480	&	-3	&	267	&	263	&	-13	\\
	&	sd	&	158	&	127	&	137	&	140	&	120	\\
DS	&	mean	&	477	&	2	&	53	&	55	&	-6	\\
	&	sd	&	127	&	103	&	138	&	146	&	93	\\
IPCW-SIR	&	mean	&	482	&	-3	&	343	&	352	&	-17	\\
	&	sd	&	185	&	136	&	166	&	173	&	146	\\
hMave	&	mean	&	482	&	2	&	-380	&	-394	&	0	\\
	&	sd	&	178	&	145	&	155	&	161	&	123	\\
Forward	&	mean	&	536	&	36	&	-3	&	-1	&	3	\\
	&	sd	&	98	&	80	&	81	&	81	&	75	\\
	&	$\widehat{\text{sd}}$	&	99	&	80	&	82	&	82	&	74	\\
	&	coverage	&	956	&	920	&	940	&	948	&	952	\\
CP-SIR	&	mean	&	511	&	0	&	-37	&	-29	&	2	\\
	&	sd	&	129	&	104	&	102	&	94	&	90	\\
	&	$\widehat{\text{sd}}$	&	134	&	105	&	103	&	105	&	94	\\
	&	coverage	&	960	&	964	&	956	&	960	&	972	\\
IR-CP	&	mean	&	546	&	45	&	-3	&	-4	&	1	\\
	&	sd	&	108	&	86	&	88	&	93	&	79	\\
	&	$\widehat{\text{sd}}$	&	108	&	88	&	91	&	89	&	82	\\
	&	coverage	&	934	&	924	&	932	&	952	&	968	\\
IR-Semi	&	mean	&	520	&	13	&	-5	&	-4	&	1	\\
	&	sd	&	115	&	91	&	95	&	99	&	80	\\
	&	$\widehat{\text{sd}}$	&	117	&	94	&	96	&	98	&	85	\\
	&	coverage	&	932	&	960	&	948	&	944	&	956	\\
\end{tabular}\label{tab:sds1}
\begin{tablenotes}
\item \scriptsize DS: \cite{li1999dimension}; IPCW-SIR: \cite{lu2011sufficient}; hMave: \cite{xia2010dimension}; Forward: forward regression; CP-SIR: the computational efficient approach; IR-CP: the counting process inverse regression; IR-Semi: the semiparametric inverse regression. For the proposed methods, ``$\widehat{\text{sd}}$'' is the bootstrap estimation of the standard deviation, and ``coverage'' is the coverage rate of the 95\% confidence internal.
\end{tablenotes}
\end{threeparttable}
\end{table}

\begin{figure}[htp]
\centering
\footnotesize
\caption{Boxplot of parameter estimates (Setting 2, $p = 6$)}
    \begin{tikzpicture}
    \node[inner sep=0pt] (dm) at (0,0)
        {\includegraphics[width=6in, height = 2.5in]{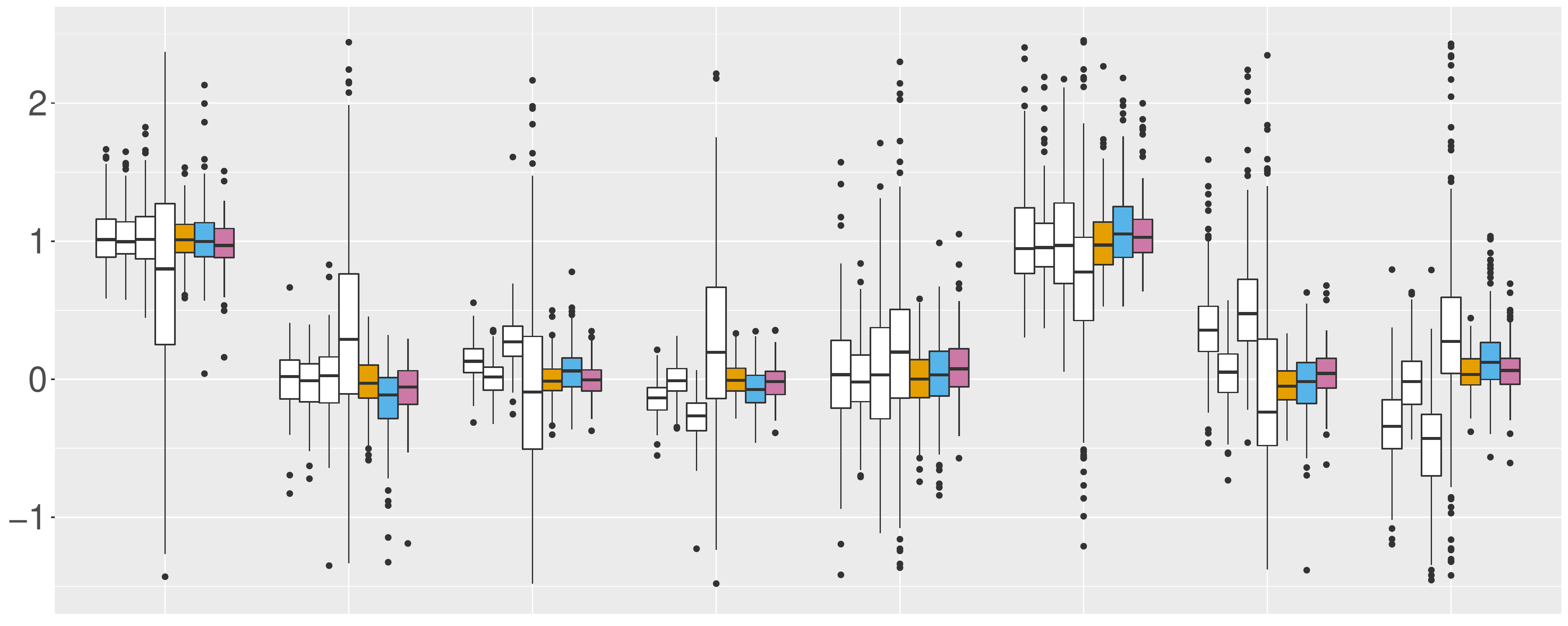}};
    \node[text width=6cm] at (-3.75, -3.5)
        {$\beta_{13} = 1$};
    \node[text width=6cm] at (-1.95, -3.5)
        {$\beta_{14} = 0$};
    \node[text width=6cm] at (-0.15, -3.5)
        {$\beta_{15} = 0$};
    \node[text width=6cm] at (1.65, -3.5)
        {$\beta_{16} = 0$};
    \node[text width=6cm] at (3.45, -3.5)
        {$\beta_{23} = 0$};
    \node[text width=6cm] at (5.25, -3.5)
        {$\beta_{24} = 1$};
    \node[text width=6cm] at (7.05, -3.5)
        {$\beta_{25} = 0$};
    \node[text width=6cm] at (8.85, -3.5)
        {$\beta_{26} = 0$};
    \end{tikzpicture}\label{fig:sds2}
    \scriptsize For each paramter, the boxes are (from left to right, with proposed methods colored): DS \citep{li1999dimension}; IPCW-SIR \citep{lu2011sufficient}; hMave \citep{xia2010dimension}; The forward regression; The computational efficient approach; The counting process inverse regression; The semiparametric inverse regression. Proposed methods are colored.
\end{figure}

\begin{table}[htbp]
\def~{\hphantom{0}}
\renewcommand{\arraystretch}{0.75}
\centering
\begin{threeparttable}
\scriptsize
\caption{Mean ($\times 10^4$) and standard deviation ($\times 10^4$) of parameter estimations (Setting 2, $p=6$).}
\begin{tabular}{r p{0.8cm} rrrrrrrr}
&  & $\beta_{13} = 1$ & $\beta_{14} = 0$ & $\beta_{15} = 0$ & $\beta_{16} = 0$
& $\beta_{23} = 0$ & $\beta_{24} = 1$ & $\beta_{25} = 0$ & $\beta_{26} = 0$ \\
{\smallskip\vspace{-0.1in}}\\
Naive	&	mean	&	1029	&	-3	&	134	&	-138	&	37	&	1101	&	377	&	-351	\\
	&	sd	&	206	&	209	&	135	&	126	&	412	&	1080	&	317	&	302	\\
DS	&	mean	&	1027	&	-31	&	12	&	-8	&	-2	&	996	&	38	&	-21	\\
	&	sd	&	196	&	195	&	130	&	121	&	265	&	289	&	217	&	223	\\
IPCW-SIR	&	mean	&	1033	&	-14	&	275	&	-272	&	89	&	1347	&	551	&	-504	\\
	&	sd	&	230	&	301	&	185	&	159	&	785	&	3514	&	1123	&	660	\\
hMave	&	mean	&	1294	&	413	&	1742	&	-2036	&	-634	&	1245	&	-2692	&	2902	\\
	&	sd	&	12873	&	11019	&	35093	&	36345	&	10599	&	8001	&	31410	&	32855	\\
CP-SIR	&	mean	&	989	&	3	&	-17	&	11	&	9	&	979	&	-8	&	23	\\
	&	sd	&	194	&	175	&	134	&	112	&	231	&	220	&	169	&	148	\\
	&	$\widehat{\text{sd}}$	&	187	&	194	&	140	&	127	&	248	&	257	&	182	&	161	\\
	&	coverage	&	926	&	959	&	950	&	988	&	983	&	975	&	979	&	983	\\
IR-CP	&	mean	&	980	&	-125	&	31	&	-38	&	45	&	1127	&	7	&	106	\\
	&	sd	&	266	&	347	&	193	&	168	&	466	&	556	&	335	&	262	\\
	&	$\widehat{\text{sd}}$	&	270	&	321	&	208	&	188	&	438	&	507	&	338	&	293	\\
	&	coverage	&	967	&	1000	&	983	&	1000	&	992	&	992	&	979	&	992	\\
IR-Semi	&	mean	&	960	&	-37	&	-21	&	-10	&	79	&	1054	&	64	&	46	\\
	&	sd	&	189	&	176	&	123	&	117	&	224	&	269	&	194	&	154	\\
	&	$\widehat{\text{sd}}$	&	196	&	215	&	148	&	133	&	289	&	317	&	228	&	201	\\
	&	coverage	&	963	&	983	&	979	&	983	&	992	&	988	&	975	&	983	\\
\end{tabular}\label{tab:sds2}
\begin{tablenotes}
\item \scriptsize DS: \cite{li1999dimension}; IPCW-SIR: \cite{lu2011sufficient}; hMave: \cite{xia2010dimension}; Forward: forward regression; CP-SIR: the computational efficient approach; IR-CP: the counting process inverse regression; IR-Semi: the semiparametric inverse regression. For the proposed methods, ``$\widehat{\text{sd}}$'' is the bootstrap estimation of the standard deviation, and ``coverage'' is the coverage rate of the 95\% confidence internal.
\end{tablenotes}
\end{threeparttable}
\end{table}

\begin{figure}[htp]
\centering
\footnotesize
\caption{Boxplot of parameter estimates (Setting 3, $p = 6$)}
    \begin{tikzpicture}
    \node[inner sep=0pt] (dm) at (0,0)
        {\includegraphics[width=6in, height = 2.5in]{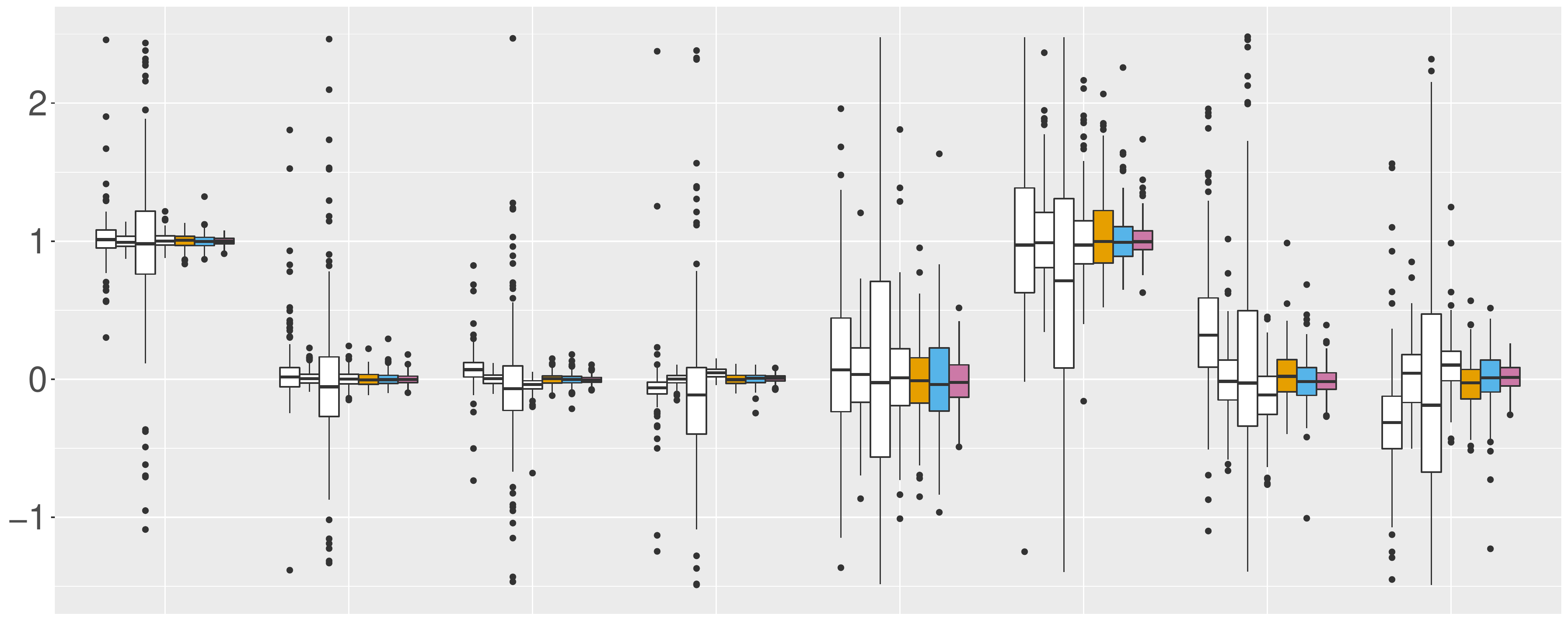}};
    \node[text width=6cm] at (-3.75, -3.5)
        {$\beta_{13} = 1$};
    \node[text width=6cm] at (-1.95, -3.5)
        {$\beta_{14} = 0$};
    \node[text width=6cm] at (-0.15, -3.5)
        {$\beta_{15} = 0$};
    \node[text width=6cm] at (1.65, -3.5)
        {$\beta_{16} = 0$};
    \node[text width=6cm] at (3.45, -3.5)
        {$\beta_{23} = 0$};
    \node[text width=6cm] at (5.25, -3.5)
        {$\beta_{24} = 1$};
    \node[text width=6cm] at (7.05, -3.5)
        {$\beta_{25} = 0$};
    \node[text width=6cm] at (8.85, -3.5)
        {$\beta_{26} = 0$};
    \end{tikzpicture}\label{fig:sds3}
    \scriptsize For each paramter, the boxes are (from left to right, with proposed methods colored): DS \citep{li1999dimension}; IPCW-SIR \citep{lu2011sufficient}; hMave \citep{xia2010dimension}; The forward regression; The computational efficient approach; The counting process inverse regression; The semiparametric inverse regression. Proposed methods are colored.
\end{figure}

\begin{table}[htbp]
\def~{\hphantom{0}}
\renewcommand{\arraystretch}{0.75}
\centering
\begin{threeparttable}
\scriptsize
\caption{Mean ($\times 10^4$) and standard deviation ($\times 10^4$) of parameter estimations (Setting 3, $p=6$).}
\begin{tabular}{r p{0.8cm} rrrrrrrr}
&  & $\beta_{13} = 1$ & $\beta_{14} = 0$ & $\beta_{15} = 0$ & $\beta_{16} = 0$
& $\beta_{23} = 0$ & $\beta_{24} = 1$ & $\beta_{25} = 0$ & $\beta_{26} = 0$ \\
{\smallskip\vspace{-0.1in}}\\
Naive	&	mean	&	1017	&	-26	&	59	&	-60	&	97	&	818	&	293	&	-271	\\
	&	sd	&	184	&	924	&	275	&	242	&	1107	&	6038	&	1787	&	1601	\\
DS	&	mean	&	1000	&	8	&	1	&	3	&	28	&	1035	&	1	&	26	\\
	&	sd	&	53	&	52	&	44	&	42	&	287	&	344	&	261	&	236	\\
IPCW-SIR	&	mean	&	1133	&	156	&	53	&	-460	&	542	&	1552	&	374	&	-1592	\\
	&	sd	&	1080	&	2164	&	1951	&	2036	&	4062	&	7741	&	6845	&	7293	\\
hMave	&	mean	&	984	&	-5	&	-46	&	33	&	-81	&	969	&	-139	&	38	\\
	&	sd	&	310	&	136	&	63	&	175	&	1890	&	848	&	325	&	1057	\\
CP-SIR	&	mean	&	998	&	8	&	-2	&	0	&	13	&	1043	&	35	&	-12	\\
	&	sd	&	56	&	53	&	42	&	43	&	278	&	313	&	189	&	226	\\
	&	$\widehat{\text{sd}}$	&	60	&	61	&	45	&	45	&	295	&	313	&	211	&	208	\\
	&	coverage	&	960	&	972	&	964	&	952	&	964	&	960	&	972	&	932	\\
IR-CP	&	mean	&	1003	&	5	&	-1	&	2	&	21	&	1012	&	-6	&	14	\\
	&	sd	&	47	&	42	&	40	&	40	&	340	&	189	&	206	&	195	\\
	&	$\widehat{\text{sd}}$	&	53	&	55	&	47	&	47	&	358	&	237	&	216	&	210	\\
	&	coverage	&	996	&	988	&	976	&	984	&	968	&	984	&	984	&	972	\\
IR-Semi	&	mean	&	1002	&	4	&	-2	&	1	&	8	&	1006	&	-6	&	7	\\
	&	sd	&	31	&	31	&	27	&	27	&	179	&	106	&	98	&	105	\\
	&	$\widehat{\text{sd}}$	&	37	&	39	&	33	&	33	&	214	&	149	&	124	&	125	\\
	&	coverage	&	984	&	992	&	988	&	984	&	984	&	988	&	996	&	976	\\
\end{tabular}\label{tab:sds3}
\begin{tablenotes}
\item \scriptsize DS: \cite{li1999dimension}; IPCW-SIR: \cite{lu2011sufficient}; hMave: \cite{xia2010dimension}; Forward: forward regression; CP-SIR: the computational efficient approach; IR-CP: the counting process inverse regression; IR-Semi: the semiparametric inverse regression. For the proposed methods, ``$\widehat{\text{sd}}$'' is the bootstrap estimation of the standard deviation, and ``coverage'' is the coverage rate of the 95\% confidence internal.
\end{tablenotes}
\end{threeparttable}
\end{table}

\begin{figure}[htp]
\centering
\footnotesize
\caption{Boxplot of parameter estimates (Setting 4, $p = 6$)}
    \begin{tikzpicture}
    \node[inner sep=0pt] (dm) at (0,0)
        {\includegraphics[width=6in, height = 2.5in]{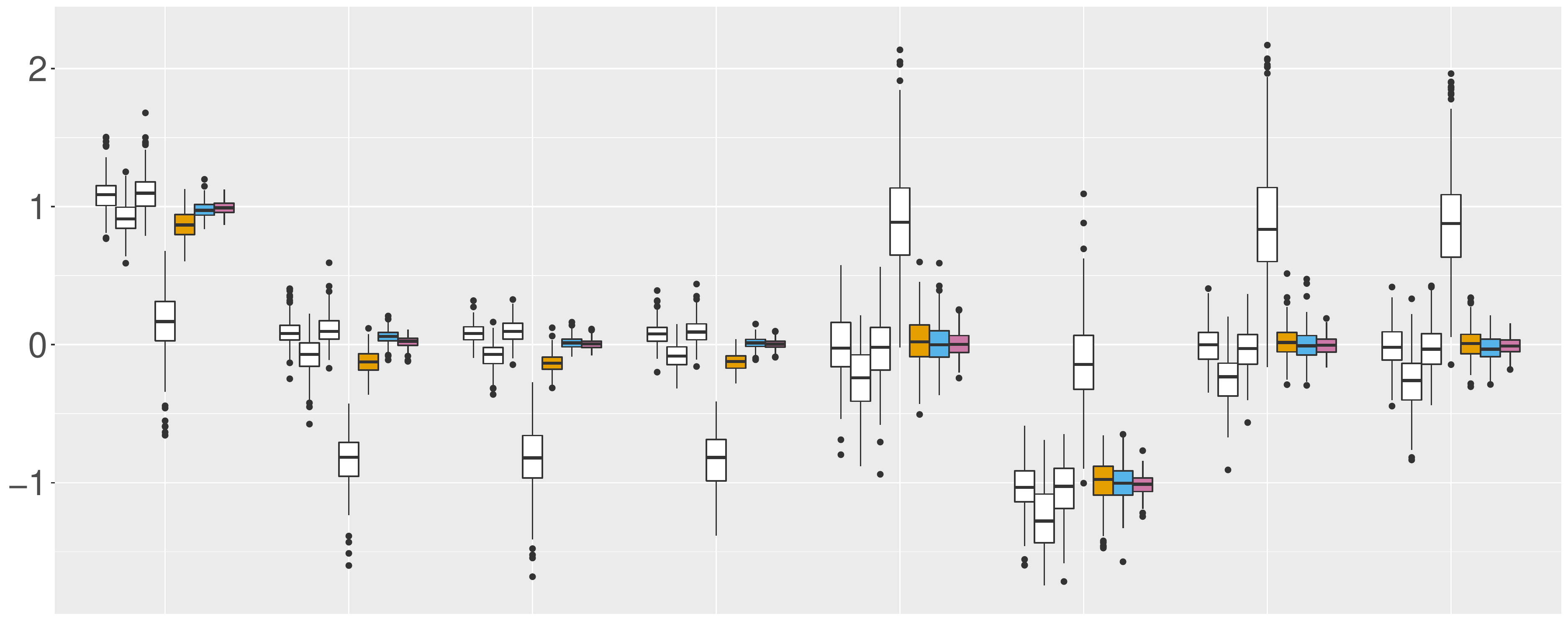}};
    \node[text width=6cm] at (-3.75, -3.5)
        {$\beta_{12} = 0.5$};
    \node[text width=6cm] at (-1.95, -3.5)
        {$\beta_{14} = 0$};
    \node[text width=6cm] at (-0.15, -3.5)
        {$\beta_{15} = 0$};
    \node[text width=6cm] at (1.65, -3.5)
        {$\beta_{16} = 0$};
    \node[text width=6cm] at (3.45, -3.5)
        {$\beta_{22} = 0$};
    \node[text width=6cm] at (5.25, -3.5)
        {$\beta_{24} = -1$};
    \node[text width=6cm] at (7.05, -3.5)
        {$\beta_{25} = 0$};
    \node[text width=6cm] at (8.85, -3.5)
        {$\beta_{26} = 0$};
    \end{tikzpicture}\label{fig:sds4}
    \scriptsize The true parameter is For each paramter, the boxes are (from left to right, with proposed methods colored): DS \citep{li1999dimension}; IPCW-SIR \citep{lu2011sufficient}; hMave \citep{xia2010dimension}; The forward regression; The computational efficient approach; The counting process inverse regression; The semiparametric inverse regression. Proposed methods are colored.
\end{figure}

\begin{table}[htbp]
\def~{\hphantom{0}}
\renewcommand{\arraystretch}{0.75}
\centering
\begin{threeparttable}
\scriptsize
\caption{Mean ($\times 10^4$) and standard deviation ($\times 10^4$) of parameter estimations (Setting 4, $p=6$).}
\begin{tabular}{r p{0.8cm} rrrrrrrr}
&  & $\beta_{12} = 1$ & $\beta_{14} = 0$ & $\beta_{15} = 0$ & $\beta_{16} = 0$
& $\beta_{22} = 0$ & $\beta_{24} = -1$ & $\beta_{25} = 0$ & $\beta_{26} = 0$ \\
{\smallskip\vspace{-0.1in}}\\
Naive	&	mean	&	1083	&	93	&	83	&	79	&	-17	&	-1043	&	-2	&	-11	\\
	&	sd	&	132	&	101	&	71	&	82	&	232	&	199	&	141	&	159	\\
DS	&	mean	&	922	&	-75	&	-79	&	-82	&	-250	&	-1293	&	-253	&	-268	\\
	&	sd	&	116	&	131	&	92	&	91	&	226	&	263	&	167	&	199	\\
IPCW-SIR	&	mean	&	1102	&	107	&	98	&	96	&	-21	&	-1062	&	-32	&	-24	\\
	&	sd	&	135	&	106	&	82	&	89	&	242	&	225	&	154	&	160	\\
hMave	&	mean	&	149	&	-837	&	-864	&	-860	&	951	&	-122	&	950	&	928	\\
	&	sd	&	239	&	194	&	310	&	265	&	419	&	329	&	539	&	443	\\
CP-SIR	&	mean	&	873	&	-127	&	-122	&	-125	&	34	&	-989	&	19	&	13	\\
	&	sd	&	102	&	91	&	63	&	70	&	199	&	164	&	126	&	140	\\
	&	$\widehat{\text{sd}}$	&	99	&	92	&	71	&	70	&	192	&	178	&	137	&	135	\\
	&	coverage	&	672	&	692	&	620	&	560	&	932	&	948	&	956	&	940	\\
IR-CP	&	mean	&	975	&	53	&	18	&	16	&	8	&	-1007	&	-2	&	-3	\\
	&	sd	&	62	&	57	&	40	&	42	&	168	&	144	&	104	&	118	\\
	&	$\widehat{\text{sd}}$	&	71	&	65	&	49	&	48	&	212	&	175	&	138	&	139	\\
	&	coverage	&	944	&	876	&	952	&	960	&	992	&	984	&	992	&	972	\\
IR-Semi	&	mean	&	990	&	19	&	9	&	5	&	10	&	-1007	&	0	&	-6	\\
	&	sd	&	50	&	45	&	33	&	34	&	91	&	82	&	64	&	68	\\
	&	$\widehat{\text{sd}}$	&	54	&	49	&	37	&	37	&	120	&	106	&	81	&	82	\\
	&	coverage	&	952	&	952	&	964	&	948	&	992	&	972	&	976	&	972	\\
\end{tabular}\label{tab:sds4}
\begin{tablenotes}
\item \scriptsize DS: \cite{li1999dimension}; IPCW-SIR: \cite{lu2011sufficient}; hMave: \cite{xia2010dimension}; Forward: forward regression; CP-SIR: the computational efficient approach; IR-CP: the counting process inverse regression; IR-Semi: the semiparametric inverse regression. For the proposed methods, ``$\widehat{\text{sd}}$'' is the bootstrap estimation of the standard deviation, and ``coverage'' is the coverage rate of the 95\% confidence internal.
\end{tablenotes}
\end{threeparttable}
\end{table}

\section{Additional results of TCGA data analysis}

This section contains additional results of the TCGA skin cutaneous melanoma data analysis. Table \ref{SKCM:dist} is the pairwise distance measure of the first direction estimated by all methods. This shows that the proposed methods mostly agree with each other. Figure \ref{fig:skcm.alt} presents the fitted models of competing approaches.

\begin{table}[htp]
    \footnotesize
    \caption{{Pairwise distance measures ($\times 10^2$) among all methods} }
    \center{
    \begin{tabular}{r r r r r r r r}
 &
 \begin{turn}{90} \begin{tabular}{l} Naive \end{tabular} \end{turn} & \begin{turn}{90} \begin{tabular}{l} DS \end{tabular} \end{turn} &
 \begin{turn}{90} \begin{tabular}{l} IPCW-SIR \end{tabular} \end{turn} & \begin{turn}{90} \begin{tabular}{l} hMave \end{tabular} \end{turn} &
 \begin{turn}{90} \begin{tabular}{l} \\ Forward \end{tabular} \end{turn} & \begin{turn}{90} \begin{tabular}{l} CP-SIR \end{tabular} \end{turn} &
 \begin{turn}{90} \begin{tabular}{l} IR-CP \end{tabular} \end{turn} \\
DS	&	141	&		&		&		&		&		&		\\
IPCW-SIR	&	63	&	141	&		&		&		&		&		\\
hMave	&	131	&	141	&	125	&		&		&		&		\\
Forward	&	141	&	87	&	141	&	135	&		&		&		\\
CP-SIR	&	140	&	92	&	140	&	129	&	31	&		&		\\
IR-CP	&	140	&	108	&	140	&	130	&	67	&	66	&		\\
IR-Semi	&	137	&	115	&	138	&	138	&	80	&	81	&	76	\\
\end{tabular}}\label{SKCM:dist}
\end{table}

\begin{figure}[htp]
\centering
\footnotesize
\caption{Estimated directions and fitted survival functions of alternative methods}{
    \begin{tabular}{cc}
    \includegraphics[height = 2.5in, width=0.45\textwidth, trim={5.5in 10.5in 4in 5in}, clip]{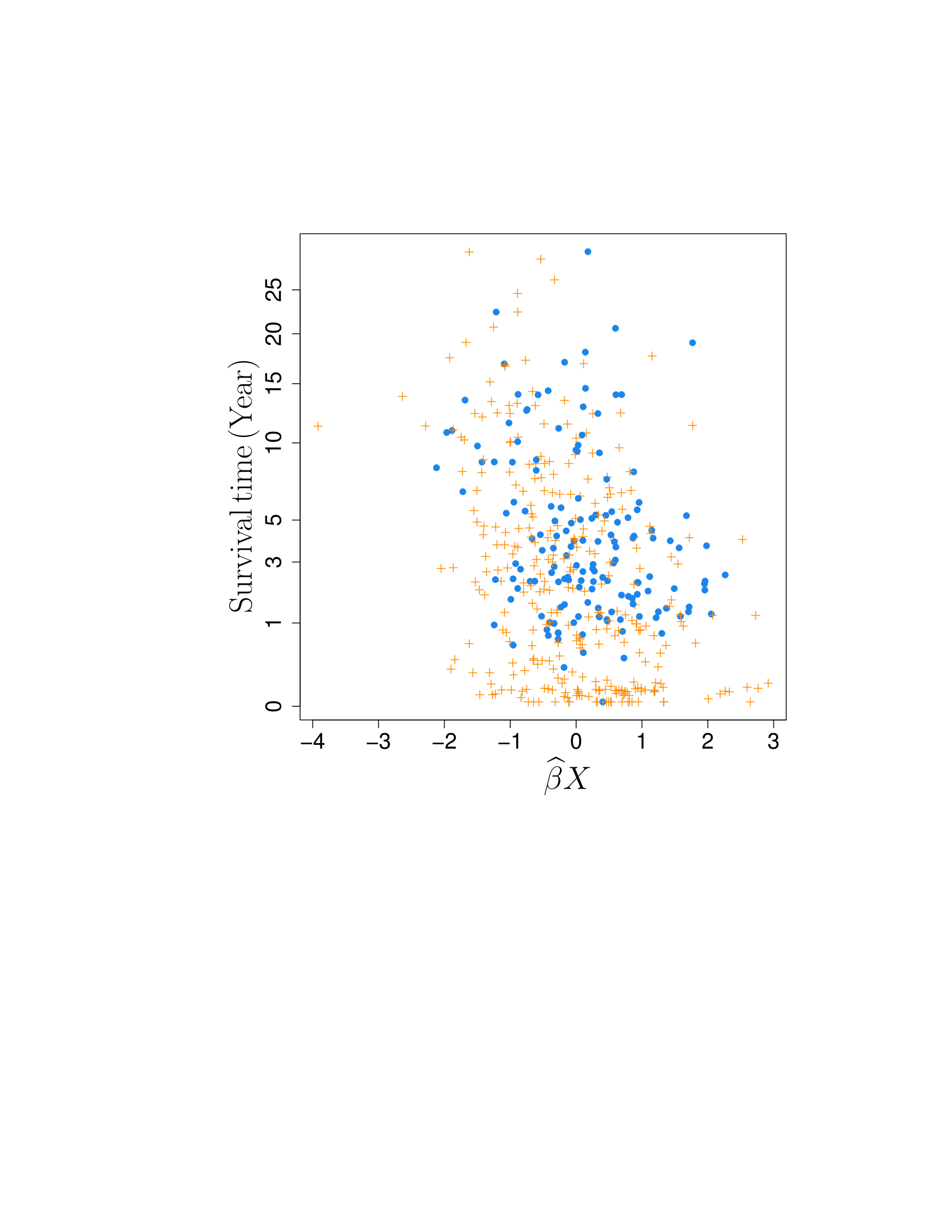} \hspace{0.05\textwidth}
    \includegraphics[height = 2.5in, width=0.45\textwidth, trim={5.5in 11in 3.5in 6in}, clip]{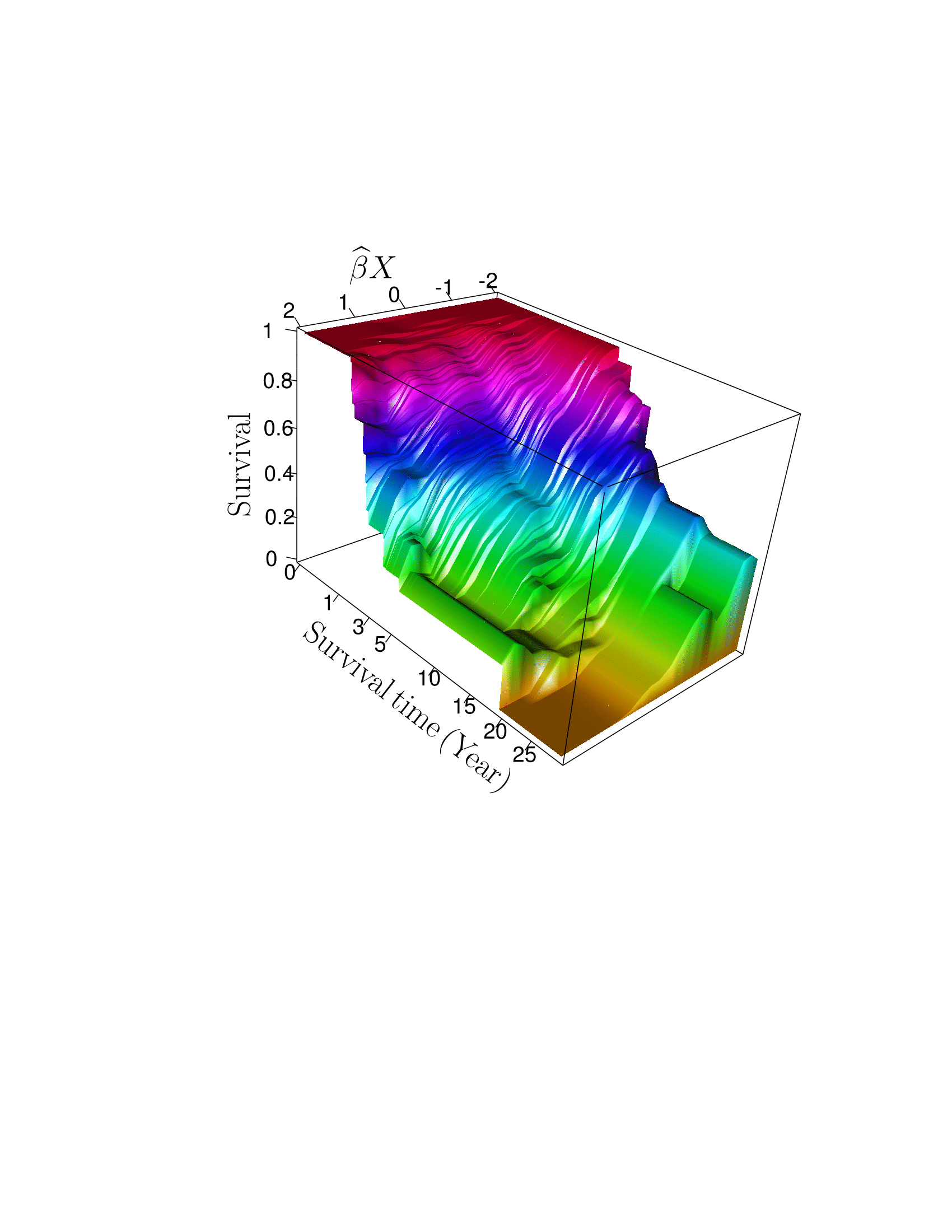}\\
    \includegraphics[height = 2.5in, width=0.45\textwidth, trim={5.5in 10.5in 4in 5in}, clip]{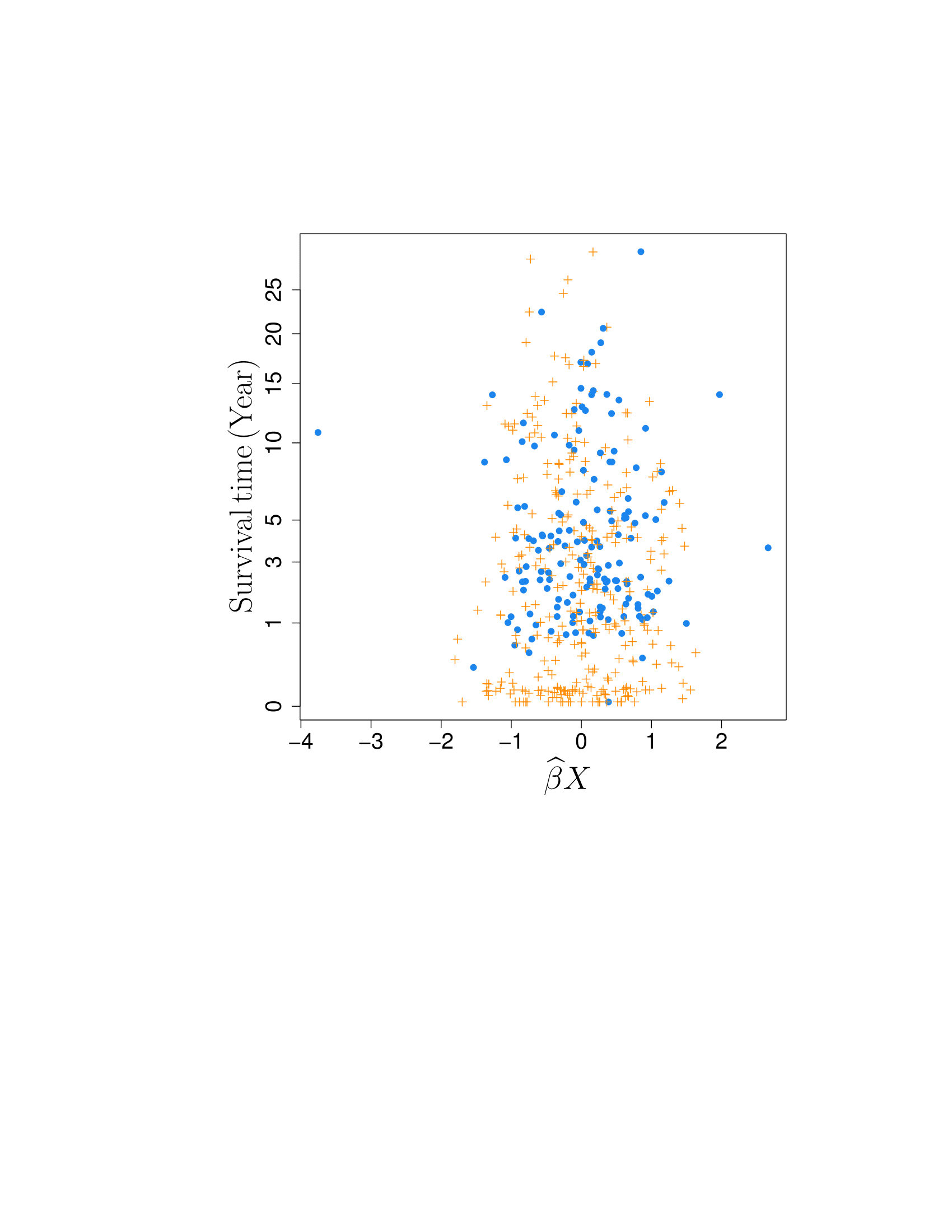} \hspace{0.05\textwidth}
    \includegraphics[height = 2.5in, width=0.45\textwidth, trim={5.5in 11in 3.5in 6in}, clip]{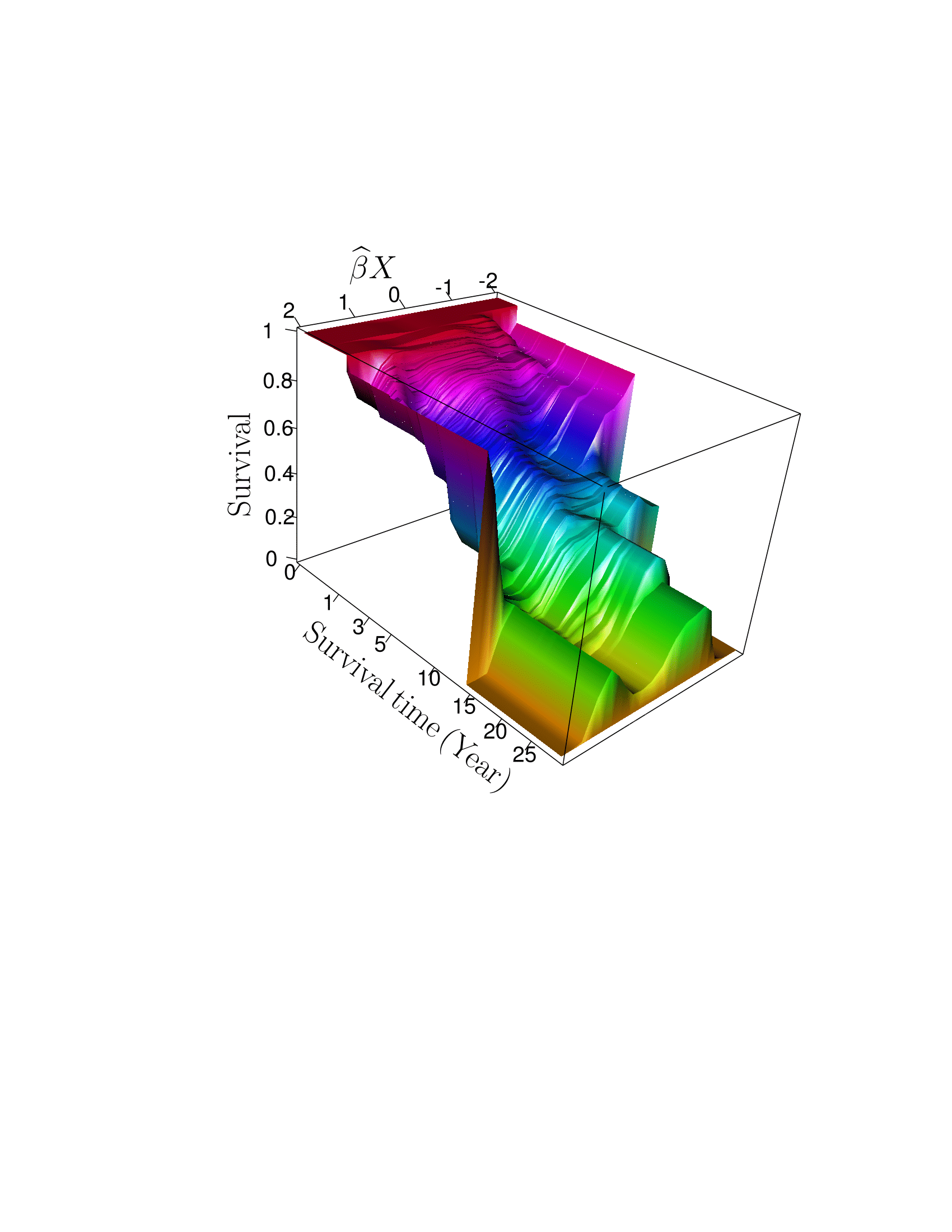}\\
    \includegraphics[height = 2.5in, width=0.45\textwidth, trim={5.5in 10.5in 4in 5in}, clip]{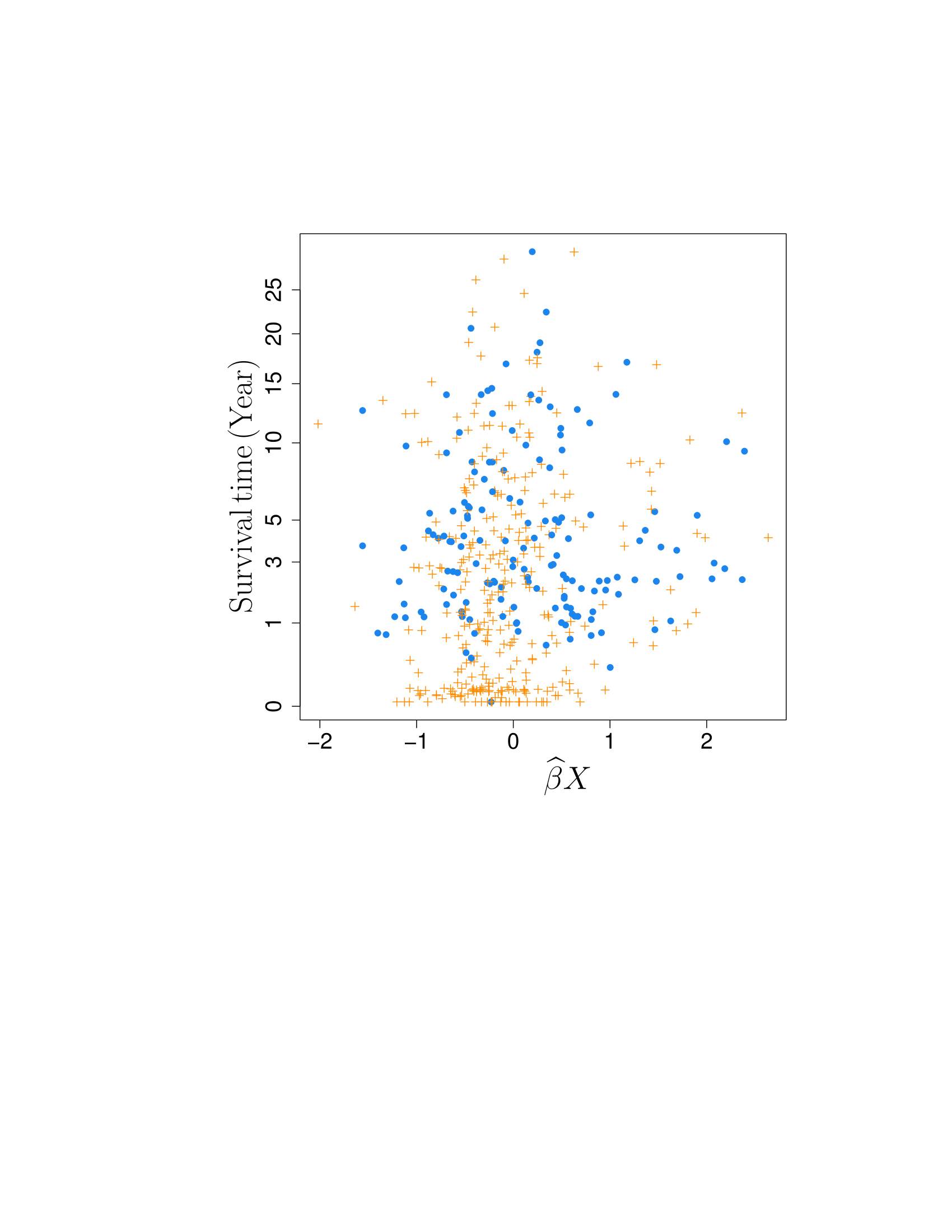} \hspace{0.05\textwidth}
    \includegraphics[height = 2.5in, width=0.45\textwidth, trim={5.5in 11in 3.5in 6in}, clip]{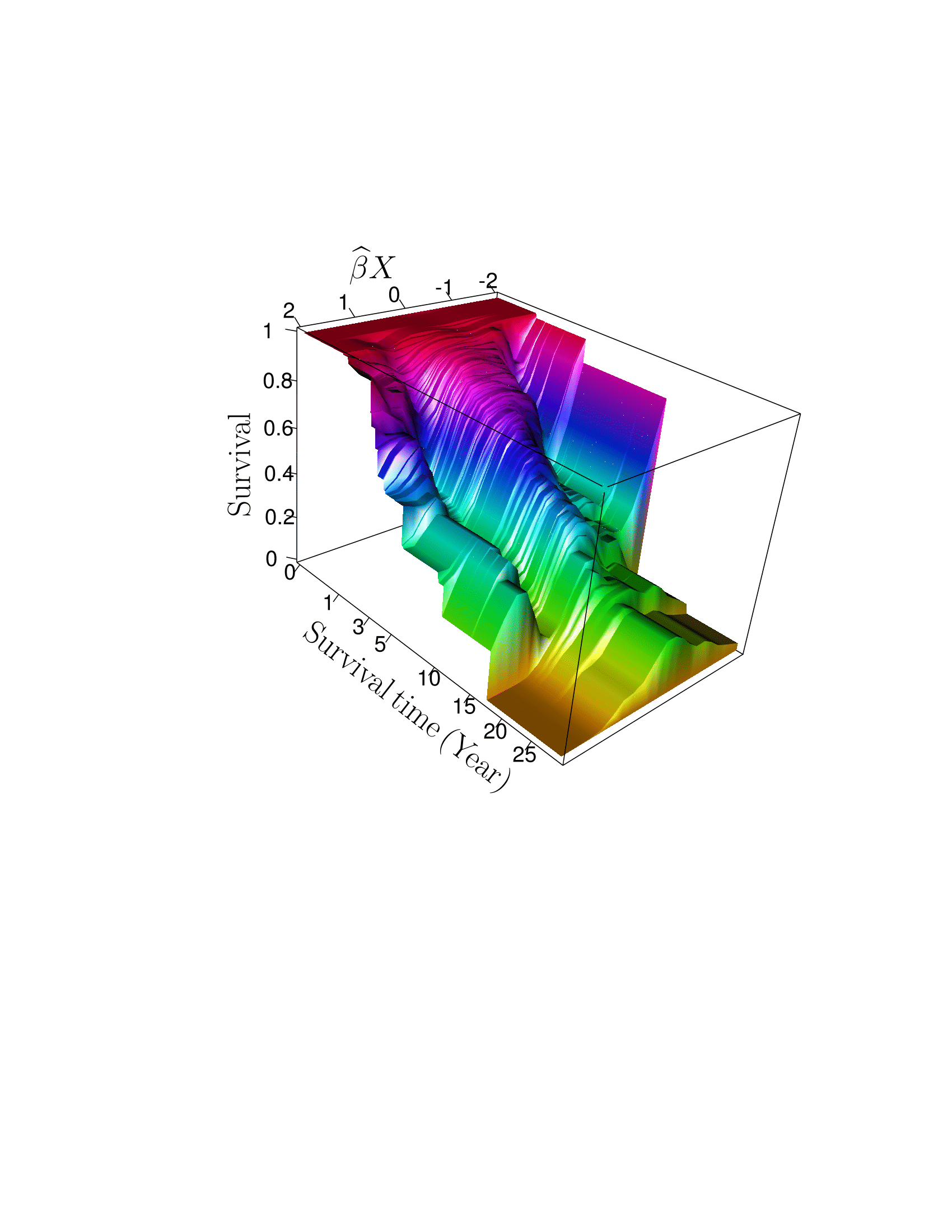}
    \end{tabular}}\label{fig:skcm.alt}
    \small \scriptsize First row: \cite{li1999dimension}; Second row: \cite{lu2011sufficient}; Third row: \cite{xia2010dimension}. For each row, the left panel is the projected direction versus the observed failure (blue dot) and censoring (orange $+$) times. The right panel is a nonparametric estimation of the survival function based on the projected direction.
\end{figure}

\bibliographystyle{biometrika}
\bibliography{CPDR_ref}

\end{document}